\newtheorem{theorem}{Theorem}
\newtheorem{lemma}{Lemma}
\newtheorem{corollary}{Corollary}
\newtheorem{remark}{Remark}
\newtheorem{observation}{Observation}
\newtheorem{hypothesis}{Hypothesis}
\theoremstyle{definition}
\newtheorem{definition}{Definition}
\theoremstyle{remark}
\crefname{observation}{Observation}{Observations}
\crefname{hypothesis}{Hypothesis}{Hypotheses}
\title{Improved Approximation Ratio for Strategyproof Facility Location on a Cycle}
\newif\ifuniqueAffiliation
\author{ \href{https://orcid.org/0009-0006-7311-5628}{\includegraphics[scale=0.06]{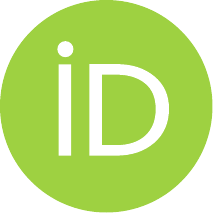}\hspace{1mm}Krzysztof Rogowski} \\ 
	University of Warsaw\\
	Institute of Informatics\\
	\texttt{kr418382@mimuw.edu.pl} \\
	%% examples of more authors
	\And
	\href{https://orcid.org/0000-0003-1756-2424}{\includegraphics[scale=0.06]{orcid.pdf}\hspace{1mm}Marcin Dziubiński} \\
	University of Warsaw\\
	Institute of Informatics\\
	\texttt{m.dziubinski@mimuw.edu.pl} \\
}
\DeclareMathOperator{\Ima}{Im}
\begin{document}
\newcommand{\TODO}[1]{\textcolor{red}{todo: #1}}
\newcommand{\cc}[2]{\left[ #1, #2 \right]} % chktex 1
\newcommand{\co}[2]{\left[ #1, #2 \right)} % chktex 1 chktex 9 chktex 10
\newcommand{\oc}[2]{\left( #1, #2 \right]} % chktex 15 chktex 37 chktex 9
\newcommand{\oo}[2]{\left( #1, #2 \right)} % chktex 37
\newcommand{\set}[1]{\left\{ #1 \right\}} % chktex 21
\newcommand{\quo}[1]{``#1''}
\newcommand{\ifrac}[2]{#1/#2 }
\newcommand{\apx}{\text{apx}}
\newcommand{\soc}{\text{sc}}
\newcommand{\cost}{\text{c}}
\newcommand{\gl}[1]{\overline{#1}}
\newcommand{\opt}{\text{opt}}
\maketitle

\begin{abstract}
    We study the problem of design of strategyproof in expectation (SP) mechanisms for
    facility location on a cycle, with the objective of minimizing the sum
    of costs of $n$ agents. We show that there exists an SP mechanism that
    attains an approximation ratio of $7/4$ with respect to the sum of costs
    of the agents, thus improving the best known upper bound of
    $2-2/n$ in the cases of $n \geq 5$. The mechanism obtaining the
    bound randomizes between two mechanisms known in the literature: the
    Random Dictator (RD) and the Proportional Circle Distance (PCD)
    mechanism of~\citet{meir2019strategyproof}. To prove the result, we propose a
    cycle-cutting technique that allows for estimating the problem on a~cycle by a~problem on a~line.
\end{abstract}

% keywords can be removed
\keywords{Facility Location \and Strategyproof Mechanisms \and Social Cost Minimization \and Cycle Graph}

% MARK: Introduction
\section{Introduction}
\label{sec:intro}
The facility location problem involves a group of $n$ agents, each having a preference over a set of locations within a~metric space. The agents have their own ideal location in the space and they prefer locations that are closer to their ideal location. A central authority (the social planner), not knowing the ideal points of the agents, has to choose one of the locations, aiming to minimize the sum of distances between the locations of the agents and the chosen location (the so-called utilitarian welfare objective). An additional problem, beyond the choice of optimal location, arises due to the lack of knowledge of the ideal points of the agents. To address this problem the social planner asks the agents to report their location and uses a mechanism which, given the reports of the agents, determines the location to be chosen. Since the agents will report their location in order to minimize their own distance to the location chosen by the mechanism, there is no guarantee that their reports will be truthful. To resolve this problem, the social planner is restricted to choosing a~strategyproof mechanism, under which reporting true ideal points is individually weakly optimal, regardless of the reports of other agents. The outcomes of the mechanisms may be either deterministic or randomized.%, in which case the point is chosen according to the probability distribution on the set of points, determined by the points reported by the agents.

The strategyproof facility location problem attracted interest from researchers for nearly 50 years now and it actively researched to this day. This is due to natural applications, like locating public facilities (schools, healthcare facilities, etc.) in towns where each of the citizens has own preference regarding the best location. These natural, physical, applications extend to virtual ones, like choosing the best time for a~meeting.

In this paper we are interested in strategyproof facility location problems on graphs, specifically on cycles.
In the seminal work, \citet{moulin1980strategy} obtained a complete characterization of strategyproof mechanisms when the space of possible locations is a line segment. He showed, in particular, that the mechanism choosing the median of the reported points is not only strategyproof but also efficient, in the sense that it minimizes the sum of distances to the ideal points of the agents. \citet{schummer2002strategy} extended the characterization of strategyproof mechanisms to graphs. It follows from their characterization that when the graph is a tree then there exists a strategyproof mechanism which is also efficient. If a graph contains a cycle, however, no deterministic strategyproof mechanism is efficient. In a later work, \citet{meir2019strategyproof} showed that this is also true for randomized mechanisms.
This raises a question: how close to efficiency can we get when cycles are present?

To address possible inefficiency of strategy proof mechanisms, \citet{procaccia2013approximate} proposed the idea of approximate mechanism design (without money). They introduced the approximation ratio which measures how good the best outcome of a strategy proof mechanism is, related to the optimal outcome. The best known upper bound on the approximation ratio of strategyproof mechanisms for facility location on a graph (under utilitarian welfare objective), $2- n/2$, was obtained by~\citet{alon2009strategyproof} with use of a random dictator (RD) mechanism. \citet{meir2019strategyproof} improved this bound for the case of a cycle with $n = 3$ agents to $7/6$. In this paper we are interested in the approximation ratio of the strategyproof facility location on a cycle.

\subsection{Related Work}

The literature on facility location, in general, and on strategyproof facility location, in particular, is vast. In the interest of space, in this review we restrict attention to the paper that are closest related to our work. For an excellent recent literature review of this field see~\citet{chan2021mechanism}.

Our work falls into the area of mechanism design without money as applied to facility location. Fundamental papers in this area, \citet{moulin1980strategy}, \citet{schummer2002strategy}, and~\citet{procaccia2013approximate} were already discussed above.

We are specifically interested in the approximation ratio of strategy proof facility location mechanisms on a cycle, with a~single location being selected and under utilitarian welfare objective. As already mentioned, \citet{alon2009strategyproof} obtained the upper bound of $2-2/n$ on this approximation ratio, by the RD mechanism. This result was improved by~\citet{meir2019strategyproof} for the case of $n=3$ agents. He defines new strategy proof mechanisms for a cycle, the Proportional Circle Distance (PCD) mechanism and the $q$-Quadratic Circle Distance ($q$-QCD) mechanism. The $1/4$-QCD is strategyproof in the case of $n=3$ agents and obtains the tight bound of $7/6$. The PCD mechanism is strategyproof for any odd number of agents. However, when the number of agents grows, its approximation ratio approaches $2$.
\citet{dokow2012mechanism} studied the strategyproof facility location on graphs where only vertices of the graph can be chosen. In the case of cycles they showed that when the number of vertices is sufficiently large, the strategyproof mechanisms must be close to dictatorial. In the case of small numbers of vertices, non-dictatorial, anonymous, strategyproof mechanisms exist.
Other notable works on strategyproof facility locations on cycles, with objective other than utilitarian social welfare, include~\citet{alon2010walking} and~\citet{cai2016facility}.

% MARK: Our contribution
\subsection{Our contribution}

We obtain a new upper bound on the approximation ratio for the strategyproof facility location on a cycle. We show that this ratio is bounded from above by $7/4$ for any odd $n \geq 3$. This improves the previously known upper bound of $2-2/n$ for the case of $n\geq 5$.
To obtain this result, we propose a~mixed randomized mechanism, RD+PCD, which mixes between using the RD mechanism and the PCD mechanism. The key challenge in proving the bound are non-linear distances on the cycle. To overcome this issue we propose a cut technique, which ``cuts'' the cycle in a properly chosen point and allows for reducing the analysis of the bounds on a cycle to a line segment. We complement the theoretical analysis of the bound with computational analysis of the performance of the RD+PCD mechanism. The analysis suggests that the upper bound could be improved even further, to $3/2$. It is important to note that in the paper we consider the notion of strategyproofness in expectation, weaker to the notion of universal strategyproofness~\citep{mu2018setting}.

% MARK: Preliminaries
\section{Preliminaries}

A facility location problem consists of a set of agents \( N \) and a domain, typically a metric space. Throughout this paper, we assume that \( |N| \geq 3 \) is an odd number and, for convenience, we define \( k = (|N| - 1)/2 \). We index the agents symmetrically around zero, so that \( N = \{-k, -k+1, \ldots, -1, 0, 1, \ldots, k\} \).

The domain is usually taken to be a graph equipped with a metric. In this work, we focus on a continuous cycle of length \( 1 \). Representing this cycle as a subset of \( \mathbb{R} \) simplifies the notation used in the subsequent sections.

A \emph{cycle} of length \( 1 \), obtained from the line segment \( \cc{-0.5}{0.5} \) by joining its endpoints,
is denoted by \( G \).
For the purposes like comparisons or arithmetic operations,
the joined endpoint of the cycle is \( -0.5 \).
The distance between any two points \( v_1, v_2 \in G \) on the cycle is given by
\( d(v_1, v_2) = \min(|v_2 - v_1|, 1 - |v_2 - v_1|) \).
Let us orientate the cycle in such a way that clockwise movement corresponds to increasing point values with one exception of passing through the joined endpoint \( -0.5 \).
An \emph{arc} between two points \( v_1, v_2 \in G \) on the cycle is defined as the set of points traversed on the cycle when moving clockwise from \( v_1 \) to \( v_2 \).

Every agent \(i \in N\) has preferences over the points of the graph, determined by their ideal point, \( b_i \). These preferences are determined by the \emph{cost function}, defined as the distance between the agent's ideal point \( b_i \) and a chosen point \( v \in G \). When comparing two points on the graph, agent \( i \) prefers the point with lower cost.

The collection of agents' ideal points constitutes a \emph{profile} and is represented by a function $b : N \rightarrow G$, where \( b_i \) is the ideal point of agent \( i \) and $b^{-1}(v)$ are the agents reporting $v$.

For any set \( A \), a \emph{lottery} over \( A \) is defined as a discrete probability distribution on \( A \).
We represent a lottery as a function \( l: A \to \cc{0}{1} \) satisfying the condition
that the sum of values in its image equals \( 1 \).
The set of all lotteries over \( A \) is denoted by \( \Delta(A) \).

Agents' preferences are extended to lotteries over the points of the graph by using the expected values. The \emph{cost} for an~agent with ideal point \( v \) under a lottery \( l \in \Delta(G) \) is a~linear extension of the deterministic cost, defined as the expected distance between \( v \) and the outcome of the lottery:
\[ c_v(l) = \mathbb{E}_{X \sim l}[d(v, X)]. \]
An agent prefers lotteries with lower costs.

In this work, the quality of any lottery \( l \in \Delta(G) \) is evaluated using the \emph{social cost},
defined as the sum of costs of all agents.
Formally, for any profile \( b \), let
\begin{equation*}
    \soc_b(l) = \sum_{i \in N} c_{b_i}(l).
\end{equation*}

For simplicity, we sometimes abuse the notation and compute the social cost for a single point \( v \in G \),
treating it as a~degenerate lottery where the given point is always selected.
The minimal social cost is referred to as the \emph{optimal cost} and denoted by \( opt_b \).
Formally,
\begin{equation*}
    \opt_b = \inf_{v \in G} \soc_b(v).
\end{equation*}
It is straightforward to see that, due to the definition of agent costs via expected values,
the optimal cost is always achieved by some point.

In this paper, we consider the problem faced by a social planner who, without knowledge of the true profile, aims to choose a lottery over the points of the graph to minimize social cost. To achieve this, the social planner employs a~mechanism design approach.

\begin{definition}\label{def:mechanizm}
    Let \( B \) be an arbitrary set of profiles.
    A function \( M: B \to \Delta(G) \), which maps profiles to lotteries over the points of the graph,
    is called a \emph{mechanism}.
\end{definition}

Given a mechanism, agents report their ideal points, not necessarily truthfully. Based on these reports, the mechanism determines a lottery, which is then used to select a point of the graph.

% MARK: Common Mechanism Properties
\subsection{Common Mechanism Properties}

Let \( b \) denote an arbitrary profile.

\begin{definition}
    A mechanism \( M \) is \emph{strategyproof (SP)} if, for any agent \( i \in N \), the cost incurred by \( i \) when truthfully reporting their ideal location \( b_i \) is no greater than the cost incurred when reporting any other point \( v \in G \). Formally:
    \begin{equation*}
        c_{b_i}(M(b)) \leq c_{b_i}(M(b[i \to v])),
    \end{equation*}
    where \( b[i \to v] \) denotes the profile in which agent \( i \) reports point \( v \) instead of their true ideal location \( b_i \).
\end{definition}

Strategyproofness implies that no agent can gain by misreporting their ideal location to the mechanism. Consequently, when considering SP mechanisms, and under the assumption that agents act rationally, they will truthfully report their ideal points.

In this work, we focus exclusively on SP mechanisms. Based on the above reasoning, we make little distinction between the true profiles of agents' ideal points and the profile reported by the agents, assuming they are equal.

In addition, mechanism \( M \) is said to be:
\begin{itemize}
    \item \emph{Anonymous}, if it does not distinguish between agents, that is for any permutation of agents \( \pi: N \to N \),
          the outcome of the mechanism \( M \) remains unchanged.
          Formally, \( M(b) = M(b \circ \pi) \).

    \item \emph{Neutral}, if it does not distinguish between similar graph's locations. Formally, for any automorphism \( f \) of the graph \( G \),
          the outcome of the mechanism \( M \) is transformed according to \( f \) i.e., \( M(f \circ b)\circ f = M(b) \).

    \item \emph{Peaks-only}, if the support of the resulting lottery \( M(b) \) is contained
          within the points reported by the agents, i.e., in the image \( \Ima b \).

    \item \emph{\( \alpha \)-approximation} if the social cost of the mechanism's outcome is
          no greater than \( \alpha \) times the optimal cost.
          Formally, \( \soc_b(M(b)) \leq \alpha \cdot opt_b \).
\end{itemize}

The \emph{approximation ratio} of a mechanism \( M \) for a set of profiles \( B \) is defined as the smallest \( \alpha \) such that \( M \) is an \( \alpha \)-approximation for all profiles from \( B \).
We denote this value by \( \apx_M(B) \).
We often abuse the notation and write \( \apx_M(b) \), where \( b \) is a single profile,
to denote the approximation ratio of the mechanism \( M \) for the set \( \set{b} \).
In this paper we are interested in mechanisms with the smallest approximation ratio for all profiles on the cycle.
This is equivalent to finding the upper bound of \( \apx_M \) for singletons that holds uniformly across all profiles.

% MARK: Known SP mechanisms
\subsection{Known SP mechanisms}
We now present two important mechanisms from the literature that are central to our work.
\begin{definition}
    \emph{RD} (Random Dictator) is a mechanism that returns
    a lottery \( l \in \Delta(G) \), where the probability of selecting any point \( v \in G \) is
    proportional to the number of agents choosing it in profile \( b \). Formally, \( l(v) = |b^{-1}({v})| / |N| \).
\end{definition}
It is known~\citep{alon2009strategyproof} that for any graph and any profile, the approximation ratio of this mechanism does not exceed \( 2 - 2/|N| \).

\begin{definition}
    The \emph{Proportional Circle Distance (PCD)} mechanism~\citep{meir2019strategyproof} is defined for cycles and an odd number of agents. The mechanism operates according to following steps:
    \begin{enumerate}
        \item Fix a linear order \( \prec \) of agents corresponding to the clockwise arrangement of their reports on the cycle, with ties broken arbitrarily (it can be verified that this does not affect the outcome of the mechanism).
        \item For each agent \( i \), define the \emph{arc opposing} agent \( i \) as the arc between the report of the agent who is \( k \) positions after \( i \) in the order \( \prec \) and the report of the agent who is \( k \) positions before \( i \) in the order \( \prec \). The terms \quo{after} and \quo{before} are understood with respect to the clockwise traversal of the cycle; for instance, the agent immediately after the last agent in \( \prec \) is the first agent in \( \prec \).
        \item Select the report of each agent \( i \) with a probability proportional to the length of the arc opposing agent \( i \).
    \end{enumerate}
\end{definition}
The approximation ratio of the PCD mechanism is upper-bounded by \( 2 \). This bound is supported by the existence of a~sequence of profiles (for varying sets of agents) in which the approximation ratio of the PCD mechanism approaches \( 2 \)~\citep{meir2019strategyproof}.
The \( RD \) and \( PCD \) mechanisms are strategyproof, anonymous, neutral, and peaks-only.

% MARK: Analysis
\section{Analysis}

In this section, we prove the main result of the paper:
\begin{theorem}\label{thm:apx-1.75}
    For any set of agents with an odd cardinality and a cyclic graph \( G \) of length \( 1 \), there exists a strategyproof mechanism \( M \) whose approximation ratio is bounded from above by $7/4$.
\end{theorem}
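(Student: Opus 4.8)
The plan is to define the mechanism $M$ as a convex combination (a lottery over mechanisms) of the two strategyproof mechanisms already introduced: with some fixed probability $p$ it runs RD, and with probability $1-p$ it runs PCD. Since both RD and PCD are strategyproof, and strategyproofness in expectation is preserved under mixing with fixed weights (the expected cost to each agent becomes the same convex combination of two costs each of which is individually minimized by truthful reporting), $M$ is automatically strategyproof; this part is routine and I would dispatch it in a sentence. The entire content of the theorem is therefore the bound $\soc_b(M(b)) = p\cdot\soc_b(\text{RD}(b)) + (1-p)\cdot\soc_b(\text{PCD}(b)) \leq \tfrac{7}{4}\,\opt_b$ for every profile $b$, with $p$ chosen to optimize the worst case.

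First I would fix the mixing weight (I expect $p$ near $1/2$, to be pinned down by the calculation) and reduce the problem to bounding the two constituent social costs simultaneously. The point is that RD and PCD fail on different profiles: RD is bad when the agents are spread out (its ratio approaches $2-2/n$), while PCD's ratio approaches $2$ on clustered/antipodal configurations; the hope is that no single profile is simultaneously bad for both, so the weighted average stays below $7/4$. Concretely I would express $\soc_b(\text{RD}(b))$ and $\soc_b(\text{PCD}(b))$ in terms of pairwise cyclic distances $d(b_i,b_j)$ and the optimal cost $\opt_b$, using the peaks-only and anonymity/neutrality properties to normalize the profile (e.g.\ place the optimal facility point, or one of the reports, at a canonical location).

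The main obstacle, as the authors flag in the introduction, is that the cycle metric $d(v_1,v_2)=\min(|v_2-v_1|,1-|v_2-v_1|)$ is not a linear (additive) function of coordinates, so the usual line-segment arguments do not transfer and the social-cost expressions are piecewise and awkward. To handle this I would deploy the announced \emph{cut technique}: choose a cut point on the cycle — the natural candidate being a point antipodal to, or otherwise determined by, the optimal location — and sever the cycle there to unroll it into the segment $\cc{-0.5}{0.5}$. The cut must be chosen so that for the configuration at hand the cyclic distance $d(b_i,b_j)$ between relevant pairs coincides with the straight-line distance $|b_i-b_j|$ on the cut-open segment, turning both social-cost sums into linear expressions in the (now collinear) coordinates. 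I would argue that one can always place the cut so that the facility chosen by the mechanism and the optimum lie on the ``short'' side, so that unrolling does not distort the distances that enter the bound, or distorts them only in a direction that can be controlled.

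Once the instance is reduced to the line, both $\soc_b(\text{RD})$ and $\soc_b(\text{PCD})$ and $\opt_b$ become linear in the sorted gap-lengths between consecutive reports, so bounding $p\cdot\soc_b(\text{RD}) + (1-p)\cdot\soc_b(\text{PCD}) \le \tfrac{7}{4}\opt_b$ reduces to a finite linear/quadratic optimization over the gap variables (subject to the gaps summing to the cycle length). I would solve this by identifying the extreme profiles — I expect these to be highly symmetric, e.g.\ agents equally spaced or split into two antipodal clusters — and verifying the inequality there, since the worst case of a linear objective over a polytope is attained at a vertex. The delicate bookkeeping will be checking that the cut can be chosen \emph{uniformly} enough that the linearization is valid on whichever configuration turns out to be extremal, and confirming that the optimal mixing weight $p$ indeed yields exactly $7/4$ rather than something larger; that trade-off between the RD-worst and PCD-worst regimes is where I expect the $7/4$ constant to emerge.
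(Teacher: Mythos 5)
Your overall architecture matches the paper's: mix RD and PCD (the paper uses exactly $p=1/2$), dispatch strategyproofness by closure under mixing, normalize by anonymity/neutrality so the optimum sits at a canonical point, cut the cycle antipodally to the optimum, and then optimize over extremal configurations (your guess of two antipodal clusters is in fact the extremal shape). Two points in your plan are imprecise in ways that matter. First, the cut cannot be chosen so that the relevant cyclic distances \emph{coincide} with line distances --- the mechanism's output is peaks-only and supported on reports spread around the whole cycle, so some pairwise distances necessarily get distorted. The paper's actual argument is one-sided: cutting at the point antipodal to the optimum can only \emph{increase} every distance (so the mechanism's social cost is overestimated, which is safe in the numerator) while it \emph{exactly preserves} every distance to the optimal point $0$ (so the denominator $\opt_b$ is unchanged). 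Your hedge ``distorts them only in a direction that can be controlled'' gestures at this, but the asymmetry (overestimate numerator, preserve denominator) is the whole point and needs to be stated.

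Second, and more seriously, your final optimization step has a gap. After the cut, the RD cost is indeed linear in the coordinates, but the PCD cost is \emph{not}: it contains products of the form $|b_j|\bigl(|b_{j-k-1}|-|b_{j-k}|\bigr)$, i.e.\ it is quadratic in the profile. Moreover the quantity to be bounded is a ratio (quadratic numerator over linear denominator), so ``a linear objective over a polytope is maximized at a vertex'' does not apply. The paper closes this gap with a specific structural observation: when you perturb a single reported value $v$ (moving all agents who report $v$ simultaneously by a parameter $x$), every quadratic term pairs two agents whose indices differ by at least $k$, hence they cannot both belong to the moved group; so the numerator stays \emph{linear in $x$}, the ratio becomes a homographic function of $x$, and homographic functions are monotone on intervals avoiding the pole --- which pushes the worst case to boundary profiles taking only the values $-0.5,0,0.5$. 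A separate argument handles profiles with $k+1$ or more agents at the optimum (these give ratio at most $3/2-1/n$), and an explicit two-parameter calculation on the boundary profiles yields the $7/4$. Without something playing the role of this homographic monotonicity step, your reduction to extremal/vertex profiles is unjustified.
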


\begin{remark}

    The above theorem can be easily generalized to a cycle \( G_z \) of arbitrary length \( z > 0 \). Let \( f \) be a mapping that uniformly scales the cycle \( G_z \) to \( G \). The mechanism \( M \) from \cref{thm:apx-1.75} can be adapted to \( G_z \) by combining it with the mapping \( f \) i.e., considering mechanism \( M' \) defined for every profile \( b\in G_z^N \) as \( M'(b)=M(f\circ b)\circ f \). This operation preserves both the strategyproofness and the approximation ratio of the mechanism.
\end{remark}

The proof of \cref{thm:apx-1.75} is constructive: we present a~mechanism that satisfies conditions of the theorem, i.e., it is SP and achieves an approximation ratio bounded by \( 7/4 \).

Let \( M_1 + M_2 \) denote a mechanism that is a \emph{mixture} of mechanisms \( M_1 \) and \( M_2 \).
The mechanism \( M_1 + M_2 \), for every profile \( b \), returns the outcome of \( M_1 \) on \( b \) with probability \( 1/2 \) and the outcome of \( M_2 \) on \( b \) with probability \( 1/2 \).
Formally, for any point \( v \in G \), the probability of selecting \( v \) in the resulting lottery \( l = (M_1 + M_2)(b) \) of the mixed mechanism is the average of its probabilities in the lotteries \( l_1 = M_1(b) \) and \( l_2 = M_2(b) \):
\begin{equation*}
    l(v) = \frac{l_1(v) + l_2(v)}{2}.
\end{equation*}

Mixing mechanisms has several desirable properties, making it a promising tool for constructing SP mechanisms
with low approximation ratio:
\begin{observation}\label{rem:wlasciwości-mieszania}
    For any mechanisms \( M_1 \) and \( M_2 \) that are strategyproof, anonymous, neutral, or peaks-only, their mixture \( M_1 + M_2 \) also satisfies these properties.
\end{observation}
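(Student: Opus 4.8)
The plan is to exploit a single structural fact: because each agent's cost $c_v(l)=\mathbb{E}_{X\sim l}[d(v,X)]$ is defined through an expectation, it is a \emph{linear} functional of the lottery $l$. Consequently, for every profile $b$ and every point $v$, the outcome of the mixture satisfies $((M_1+M_2)(b))(v)=\tfrac{1}{2}(M_1(b)(v)+M_2(b)(v))$ by definition, and therefore
\begin{equation*}
    c_v\big((M_1+M_2)(b)\big)=\tfrac{1}{2}\,c_v\big(M_1(b)\big)+\tfrac{1}{2}\,c_v\big(M_2(b)\big).
\end{equation*}
I would establish this identity first and then verify each of the four properties separately, each time reducing the claim for $M_1+M_2$ to the corresponding assumed property of $M_1$ and of $M_2$.

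For \textbf{strategyproofness}, which is the only property where the linearity above does real work, I would fix an agent $i$ and a deviation $v$, expand both $c_{b_i}((M_1+M_2)(b))$ and $c_{b_i}((M_1+M_2)(b[i\to v]))$ using the cost identity, and then apply the SP inequalities $c_{b_i}(M_1(b))\le c_{b_i}(M_1(b[i\to v]))$ and $c_{b_i}(M_2(b))\le c_{b_i}(M_2(b[i\to v]))$ termwise. Averaging the two inequalities yields exactly the SP condition for $M_1+M_2$; here it is essential that both mixing weights are nonnegative, so that the inequalities are preserved under the convex combination.

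The remaining three properties follow by checking that forming the average commutes with the relevant operation. For \textbf{anonymity} I would use $M_1(b)=M_1(b\circ\pi)$ and $M_2(b)=M_2(b\circ\pi)$ and average them pointwise to obtain $(M_1+M_2)(b)=(M_1+M_2)(b\circ\pi)$. For \textbf{neutrality}, I would note that post-composition of a lottery with a graph automorphism $f$ is itself a linear reindexing, so it distributes over the average; combining this with $M_1(f\circ b)\circ f=M_1(b)$ and the analogous identity for $M_2$ gives $((M_1+M_2)(f\circ b))\circ f=(M_1+M_2)(b)$. For \textbf{peaks-only}, I would observe that because both weights are strictly positive, the support of the averaged lottery is the union of the supports of $M_1(b)$ and $M_2(b)$; since each is contained in $\Ima b$, so is their union.

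I do not expect any genuine obstacle here: the entire argument rests on the linearity of the cost functional and on the fact that a convex combination with positive weights preserves inequalities and unions of supports. The only point deserving care is to state the cost identity explicitly before invoking it for strategyproofness, since that is the step that would silently fail for an objective not linear in the lottery.
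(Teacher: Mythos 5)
Your proof is correct; the paper states this as an observation without any written proof, and your argument (linearity of the expected cost in the lottery, averaging the two SP inequalities, and the routine checks that averaging commutes with permuting agents, with post-composition by an automorphism, and with taking supports) is exactly the standard verification the authors evidently had in mind. Your explicit remarks that the mixing weights must be nonnegative for SP and strictly positive for the support-union step are the only points of substance, and you handle both correctly.
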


Additionally, the approximation ratio of the mixed mechanism behaves as follows:
\begin{observation}\label{rem:apx-mieszania}
    For a profile \( b \), the approximation ratio of the mixed mechanism \( M_1 + M_2 \) is the arithmetic mean of the approximation ratios of \( M_1 \) and \( M_2 \):
    \begin{equation*}
        \begin{split}
            \apx_{M_1+M_2}(b) = \frac{\apx_{M_1}(b) + \apx_{M_2}(b)}{2}.
        \end{split}
    \end{equation*}
    This result follows directly from the definition of the approximation ratio and the linearity of the social cost.
\end{observation}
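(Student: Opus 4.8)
The plan is to unwind the per-profile approximation ratio into a closed form and then exploit the fact that the social-cost functional is linear in its lottery argument. First I would observe that, for a single profile $b$, the definition of the approximation ratio collapses to a ratio: the smallest $\alpha$ satisfying $\soc_b(M(b)) \leq \alpha \cdot \opt_b$ is exactly
\[
    \apx_M(b) = \frac{\soc_b(M(b))}{\opt_b},
\]
valid whenever $\opt_b > 0$ (the degenerate case $\opt_b = 0$ forces $\soc_b = 0$ for every mechanism and both sides of the claimed identity agree trivially). The crucial point to record here is that $\opt_b = \inf_{v \in G} \soc_b(v)$ depends only on the profile $b$ and not on the mechanism, so it is a common denominator shared by $\apx_{M_1}(b)$, $\apx_{M_2}(b)$, and $\apx_{M_1+M_2}(b)$.

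Next I would establish the linearity of $\soc_b$ as a function of the lottery. For any lottery $l \in \Delta(G)$, each agent's cost expands as $c_{b_i}(l) = \mathbb{E}_{X \sim l}[d(b_i, X)] = \sum_{v} l(v)\, d(b_i, v)$, which is a linear functional of $l$; summing over $i \in N$ preserves linearity, so $\soc_b$ is itself linear in $l$. Consequently, for the two component outcomes $l_1 = M_1(b)$ and $l_2 = M_2(b)$ and their average $l = (l_1 + l_2)/2$, linearity gives
\[
    \soc_b\bigl((M_1 + M_2)(b)\bigr) = \soc_b\!\left(\frac{l_1 + l_2}{2}\right) = \frac{\soc_b(l_1) + \soc_b(l_2)}{2}.
\]

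Finally I would divide this identity through by the mechanism-independent quantity $\opt_b$ and apply the closed form from the first step to each term, obtaining
\[
    \apx_{M_1+M_2}(b) = \frac{\soc_b\bigl((M_1+M_2)(b)\bigr)}{\opt_b} = \frac{\soc_b(M_1(b)) + \soc_b(M_2(b))}{2\,\opt_b} = \frac{\apx_{M_1}(b) + \apx_{M_2}(b)}{2},
\]
which is the desired claim. There is essentially no hard obstacle here: the whole argument is a direct computation, and the only genuine verification is the reduction of the singleton approximation ratio to the bare cost ratio together with the remark that $\opt_b$ is common to both mechanisms. The linearity step relies solely on the definition of agent cost as an expectation (a linear extension of the deterministic cost), so no properties specific to the cycle metric or to the particular mechanisms $M_1, M_2$ are needed.
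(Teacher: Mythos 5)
Your proof is correct in its main case and is essentially the paper's own argument spelled out: the paper gives only the one-line justification (definition of the approximation ratio plus linearity of the social cost), and your expansion — the per-profile ratio $\apx_M(b)=\soc_b(M(b))/\opt_b$ with the mechanism-independent denominator $\opt_b$, linearity of $\soc_b$ in the lottery, and division through — is exactly what that one-liner abbreviates. One aside in your write-up is wrong, though: $\opt_b=0$ does \emph{not} force $\soc_b(M(b))=0$ for an arbitrary mechanism, since $M$ may place probability mass away from the common report; that conclusion needs the peaks-only property, which is precisely how the paper handles the degenerate case separately in \cref{rem:apx-rdpcd-1} rather than inside this observation. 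Since the observation is only ever invoked for profiles with $\opt_b>0$ (the paper explicitly excludes $\opt_b=0$ from the subsequent analysis), this inaccuracy does not affect the validity of your argument where it matters, but the degenerate case should either be restricted away or justified via peaks-only rather than claimed for every mechanism.
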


Consider the mechanism \( RD + PCD \).
By \cref{rem:wlasciwości-mieszania}, it is strategyproof.
Therefore, to conclude the proof of \cref{thm:apx-1.75}, it suffices to show that the approximation ratio of the \( RD + PCD \) mechanism is bounded above by \( 7/4 \).

Let us begin with some trivial bounds.

\begin{remark}\label{rem:trivial-bounds}
    The approximation ratio of the \( RD + PCD \) mechanism equals \( \left(\apx_{RD}(b) + \apx_{PCD}(b)\right)/2 \) (by \cref{rem:apx-mieszania}), which:
    \begin{itemize}
        \item cannot be bounded from above by any value smaller than \( 3/2 \), since there exist profiles for which approximation ratio of RD is arbitrarily close to \( 2 \), while approximation ratio of PCD is at least \( 1 \),
        \item is at most \( 2 \), under the conjecture that approximation ratio of \( PCD \) mechanism equals \( 2 \)~\citep{meir2019strategyproof}.
    \end{itemize}
\end{remark}

The bound of \( 7/4 \), stated in \cref{thm:apx-1.75}, lies in the middle between these two values and we will improve this result even further in \cref{sec:experiments}.

Approximation ratio of the \( RD + PCD \) mechanism, by definition, is the smallest \( \alpha \) such that inequality:
\begin{equation*}
    \soc_b((RD + PCD)(b)) \leq \alpha \cdot \opt_b
\end{equation*}
holds for every profile \( b \).
The above formula is not the best to work with and can be simplified if the optimal cost is greater than \( 0 \).
Let us start with considering the border case of \( \opt_b=0 \), to exclude it from further analysis:
\begin{remark}\label{rem:apx-rdpcd-1}
    Consider a profile \( b \), for which \( \opt_b=0 \).
    Then the approximation ratio of the \( RD + PCD \) mechanism is \( 1 \).
\end{remark}
\begin{proof}
    If \( \opt_b=0 \) then all the agents report the same point under \( b \).
    Let us denote this point by \( v \).
    By \cref{rem:wlasciwości-mieszania}, the \( RD+PCD \) mechanism is peaks-only and so its outcome on \( b \) is \( v \) with probability \( 1 \).
    Hence, the social cost of the mechanism output is \( 0 \) and the approximation ratio is \( 1 \).
\end{proof}

If \( \opt_b > 0 \), which we assume for the rest of the paper, the following holds:
\begin{equation*}
    \apx_{RD+PCD}(b)=\frac{\soc_b((RD + PCD)(b))}{\opt_b}.
\end{equation*}
Estimating the above expression for a mechanism operating on a cycle presents new challenges as compared to the case of a line segment:
\begin{enumerate}
    \item For the line segment, it is known that the optimal cost \( \opt_b \) corresponds to the social cost for the point preferred by the median agent. For the cycle, we are not aware of any compact description of \( \opt_b \).
    \item The social cost depends on the distances between the lottery points chosen by the mechanism and the agents' ideal points.
          Distances on a line segment have significantly simpler form in comparison to those on a cycle.
\end{enumerate}

We address the first of these problems in \cref{sec:normalizacja-stanow-swiata}, by selecting a subset of normalized profiles \( B' \subsetneq B \), for which the point generating optimal cost is fixed. In addition, we require that agents are ordered according to the clockwise traversal of the cycle, which simplifies further analysis.
We show that due to the anonymity and neutrality of the \( RD+PCD \) mechanism,
set \( B' \) is representative in terms of the approximation ratio values achieved by \( RD+PCD \),
i.e., for every \( b \in B \), there exists \( b' \in B' \) such that \( \apx_{RD+PCD}(b) = \apx_{RD+PCD}(b') \).
This allows us to narrow down further considerations to the set of normalized profiles \( B' \).

The second problem, of the more complex nature of distances between points on a cycle, is addressed in \cref{sec:cykl-cutting}.
We estimate these distances through their counterparts on a~line segment obtained by cutting the cycle before some point.
Such estimation increases the social costs associated with certain points of the graph, which could potentially increase the optimal cost
(and thus decrease the estimated value yielding invalid bound).
Fortunately, due to the earlier restriction to the set of normalized profiles \( B' \), the point minimizing the social cost is fixed.
This enables us to select the cutting point so that the minimal social cost is preserved, thereby obtaining a valid estimate.

In \cref{sec:auxiliary-estimations}, we provide concise formulas for parts of the approximation ratio after the cut of the \( RD+PCD \) mechanism corresponding to the social costs of the \( RD \) and \( PCD \) mechanisms.
We substitute these expressions into the formula for the approximation ratio after the cut of the \( RD+PCD \) mechanism, obtaining an estimate \( \phi \).
Finally, in \cref{sec:do-brzegu}, we upper-bound the value of \( \phi \) by \( 7/4 \) on a set of normalized profiles via a series of technical lemmas.

The formal proof of \cref{thm:apx-1.75}, based on the developed lemmas, is presented in \cref{sec:proof-main-theorem}.

\subsection{Normalization of profiles}\label{sec:normalizacja-stanow-swiata}
% By \cref{rem:wlasciwości-mieszania}, the mechanism \( RD+PCD \) is neutral and anonymous.
We start with the definition of a subset of profiles for which the formula for the approximation ratio of the \( RD+PCD \) mechanism is simplified, but still can obtain the same values.
\begin{definition}
    A profile \( b \) is \emph{normalized} if:
    \begin{itemize}
        \item point \( 0 \) minimizes the social cost in \( b \) over all the points of the cycle,
        \item \( b \) is a non-decreasing function with respect to the natural order of agents and the clockwise order of the points of the cycle,
        \item agent \( 0 \) reports the point \( 0 \).
    \end{itemize}
\end{definition}

The following lemma allows us to restrict  attention to the normalized profiles.
\begin{restatable}{lemma}{lemZnormalizowaneStanySwiata}\label{lem:normalizacja-stanow-swiata}
    Let \( M \) be any mechanism that is anonymous and neutral.
    Let \( b \) be any profile.
    There exists a normalized profile, \( b' \), such that \( \apx_M(b) = \apx_M(b') \).
\end{restatable}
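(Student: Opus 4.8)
The plan is to transform an arbitrary profile $b$ into a normalized one by a sequence of symmetry operations that preserve the approximation ratio, using precisely the anonymity and neutrality hypotheses on $M$. The key point is that the approximation ratio $\apx_M(b) = \soc_b(M(b))/\opt_b$ depends only on the \emph{multiset} of reported points together with the geometry of the cycle, and both anonymity and neutrality let us move this multiset around without changing the ratio. So I would proceed by establishing the three defining conditions of a normalized profile one at a time, each via an appropriate symmetry.

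First I would handle the requirement that point $0$ minimizes the social cost. Let $v^\star$ be a point attaining $\opt_b$ (which exists, as noted after the definition of $\opt_b$). The rotation of the cycle sending $v^\star$ to $0$ is an automorphism $f$ of $G$, so I apply neutrality: setting $b'' = f \circ b$, neutrality gives $M(f \circ b) \circ f = M(b)$, and since $f$ is an isometry it preserves all distances and hence all social costs. Concretely one checks that $\soc_{b''}(l) = \soc_b(f^{-1}\circ l)$ for every lottery, so the optimal point of $b''$ is $f(v^\star) = 0$ and the social cost of the mechanism's output is unchanged; thus $\apx_M(b'') = \apx_M(b)$ and $0$ now minimizes the social cost. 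Second, to make the profile non-decreasing, I apply anonymity: relabel the agents by a permutation $\pi$ that sorts their reports according to the clockwise order of points, so that the composed profile $b'' \circ \pi$ is non-decreasing. By anonymity $M(b'' \circ \pi) = M(b'')$, so the ratio is preserved, and since the multiset of reports is unchanged the optimal cost and optimal point $0$ are likewise unchanged. The only subtlety is choosing the correct starting point for the clockwise enumeration so that agent $0$ ends up reporting a point as close to $0$ as possible; I would fix the enumeration to begin just after the joined endpoint $-0.5$ and argue the cyclic shift below takes care of the rest.

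Third, to ensure agent $0$ reports the point $0$, I would compose with a cyclic shift of the agent indices. Because $0$ is the social-cost minimizer, there is an agent reporting a point on each side of $0$ (or exactly at $0$), and by the median characterization underlying the optimality of $0$ one can show some agent's report can be placed at $0$; more carefully, since the profile is now non-decreasing and $0$ is optimal, I would argue that rotating the agent labels so that the appropriate median-position agent is re-indexed as agent $0$ achieves the condition. This is again an application of anonymity (a cyclic permutation of indices), which preserves both the ratio and the non-decreasing structure. Care is needed to verify that the cyclic shift of \emph{indices} is compatible with keeping the profile non-decreasing on the symmetric index set $\{-k,\dots,k\}$; the combinatorial bookkeeping here — reconciling the clockwise sort with the symmetric indexing and the requirement $b'_0 = 0$ — is where I expect the main technical friction.

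The main obstacle, therefore, is not any deep idea but the simultaneous satisfaction of all three normalization conditions: each symmetry operation must be shown not to destroy the conditions already established. The cleanest route is to fix an order of operations (first rotate the cycle to place the optimum at $0$, then sort and cyclically shift the agents) and verify that sorting and shifting are pure relabelings that leave the point $0$ fixed as both the optimum and the report of agent $0$. Throughout, the underlying lemma I rely on is that isometries of $G$ and permutations of $N$ act on $\soc_b$ in the transparent way dictated by neutrality and anonymity, so that $\apx_M$ is invariant under both; assembling these invariances and checking the geometric details of the symmetric indexing is the full content of the proof.
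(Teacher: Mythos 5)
Your first two steps coincide with the paper's: rotate the cycle by neutrality so that the optimum sits at $0$, then sort the agents clockwise from the joined endpoint $-0.5$ by anonymity. The gap is in how you establish the third normalization condition, $b'_0 = 0$. You propose an additional cyclic shift of the agent indices so that ``the appropriate median-position agent is re-indexed as agent $0$''. This step is both unnecessary and broken: a cyclic permutation of the labels of an already-sorted profile destroys the non-decreasing property on $\{-k,\dots,k\}$ you just established (unless all reports coincide), and, more fundamentally, no relabeling whatsoever can make agent $0$ report the point $0$ unless some agent actually reports $0$ --- a fact you invoke via ``the median characterization underlying the optimality of $0$'' but never prove. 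On a cycle there is no off-the-shelf median characterization; that the optimum coincides with a reported point, and indeed with the report of the middle agent in the clockwise enumeration started at $-0.5$, is precisely the content that has to be argued.

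The paper closes exactly this hole with no extra operation: after the two symmetry steps, if $b'_0 \neq 0$, say $b'_0 > 0$, then all $k+1$ agents with index $i \geq 0$ report points in $\cc{b'_0}{0.5}$ (agents with negative reports were enumerated before agent $0$), so moving the facility from $0$ slightly clockwise toward $b'_0$ strictly decreases the distance to a strict majority of the agents while increasing the distance to the remaining $k$ by at most the same amount; this contradicts the optimality of $0$, and the case $b'_0 < 0$ is symmetric. You correctly flagged the reconciliation of the clockwise sort, the symmetric indexing, and $b'_0=0$ as ``the main technical friction'', but you did not supply the argument, and the device you propose to resolve it (an index rotation) cannot work. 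The rest of your write-up --- the invariance of $\soc$ and $\opt_b$ under isometries of $G$ and permutations of $N$, hence the invariance of $\apx_M$ --- is correct and matches the paper.
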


\begin{proof}
    Let \( b \) be any profile.
    A normalized profile \( b' \) with the same approximation ratio is constructed as follows:
    \begin{enumerate}
        \item By the neutrality of the \( M \) mechanism,
              we cyclically shift the naming of the points of the graph, so that
              point \( 0 \) generates the minimal social cost.
        \item By the anonymity property, we can freely rename agents without affecting the result of the mechanism.
              We traverse the cycle \( G \) clockwise starting from the joined endpoint \( -0.5 \),
              numbering agents in the order their reports are encountered (breaking ties arbitrarily).
    \end{enumerate}

    Due to the properties of the point generating the minimal social cost on the cycle,
    the above two steps ensure that the third condition of normalization is satisfied,
    namely, that agent \( 0 \) reports point \( 0 \).
    Assume, to the contrary, that this is not true, i.e., \( b'_0 \neq 0 \).
    Firstly consider the case \( b'_0 > 0 \).
    By step \( 2 \), \( b' \) is non-decreasing, so any agent \( i > 0 \) reports a~point
    greater than or equal to \( b'_0 \), i.e., \( b'_i \geq b'_0 \).
    At the same time, \( b'_i \leq 0.5 \) since agents selecting negative points are numbered before agent \( 0 \).
    It follows that moving from point \( 0 \) to point \( b'_0 \)
    brings us closer to the reports of strictly more than half of the agents (all agents with indices \( i \geq 0 \)).
    Thus, the social cost of point \( b'_0 \) is less than the social cost of point \( 0 \),
    a contradiction to the assumption that point \( 0 \) generates the minimal social cost.
    The case of \( b'_0 < 0 \) is handled analogously.
\end{proof}

% MARK: Cut of the profile
\subsection{Cut of the profile}\label{sec:cykl-cutting}

After introducing the concept of normalization of profiles, let us proceed to analyzing the approximation ratio of the \( RD+PCD \) mechanism.
This quantity, by definition, depends on the social cost,
which, in turn, depends on the distances between the agents' ideal points and the points in the support of the output lottery of the mechanism.
The definition of distance between the points on a cycle involves a~minimum.
This non-linearity makes a direct analysis of the approximation ratio of the \( RD+PCD \) mechanism challenging.
We address this issue by replacing the distance function used to compute the social cost with a simpler one.
Let \( d' \) denote a line-segment-like distance function on cycle \( G \), defined for any two points \( v_1, v_2 \in G \) as \( d'(v_1, v_2) = |v_2 - v_1| \).
Let us denote quantities such as cost, social cost, and the approximation ratio computed with respect the distance function $d'$, as \( c', \soc',\apx' \), respectively.

For any pair of points \( v_1, v_2 \in G \), the metric \( d \) is defined as \( d(v_1, v_2) = \min(|v_2 - v_1|, 1 - |v_2 - v_1|) \).
This corresponds to the general definition of distance on graphs as the length of the shortest path between two points.
The metric \( d' \) is derived from \( d \) by removing one of the two components of the minimum.
Conceptually, this is equivalent to excluding paths that traverse the joined endpoint of the cycle when defining the distance.
Intuitively, we can imagine that for the purpose of measuring distances, the cycle is \quo{cut} just before point \( -0.5 \).
Due to this intuition, the transition from the cycle with the metric \( d \) to the cycle with the metric \( d' \) will henceforth be referred to as the \emph{cut} of the cycle.

The following inequality connects the value of the social cost for different points of the cycle \( G \) computed before and after the cut (with respect to the original distance function, \( d \), and the new distance function, \( d' \)):

\begin{lemma}\label{lem:rozciecie}
    Let \( b \) be any normalized profile.
    For any lottery \( l \in \Delta(G) \), it holds that:
    \begin{equation*}
        \soc'_b(l) \ge \soc_b(l).
    \end{equation*}
    Moreover, the social cost associated with the point \( 0 \) is preserved by the cut,
    i.e., \( \soc'_b(0) = \soc_b(0) \).
\end{lemma}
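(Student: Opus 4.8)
The plan is to deduce both claims from a single pointwise observation: the cut can only enlarge distances on the cycle, while it leaves distances to the point \( 0 \) untouched. Neither claim will require the full strength of normalization; I would derive them directly from the definitions of \( d \) and \( d' \) and from the coordinate range of the cycle.

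First I would establish the pointwise domination \( d \le d' \) on \( G \). For any \( v_1, v_2 \in G \),
\( d(v_1,v_2) = \min(|v_2-v_1|,\, 1-|v_2-v_1|) \le |v_2-v_1| = d'(v_1,v_2) \),
simply because a minimum never exceeds either of its arguments. This inequality lifts to costs and then to social costs with no extra work: for any lottery \( l \) and any point \( v \), monotonicity of the expectation gives \( c_v(l) = \mathbb{E}_{X\sim l}[d(v,X)] \le \mathbb{E}_{X\sim l}[d'(v,X)] = c'_v(l) \), and summing over agents yields \( \soc_b(l) = \sum_{i\in N} c_{b_i}(l) \le \sum_{i\in N} c'_{b_i}(l) = \soc'_b(l) \). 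This is the first claim; I would note that it holds for any profile \( b \), using only that \( l \) is a lottery on \( G \).

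For the second claim I would look at the degenerate lottery concentrated at \( 0 \) and show that the cut preserves each individual distance \( d(b_i,0) \). The key point is the chosen representation of the cycle: every report \( b_i \) lies in the segment \( \cc{-0.5}{0.5} \), so \( |b_i| \le 1/2 \) and hence \( 1-|b_i| \ge 1/2 \ge |b_i| \). Therefore the minimum defining \( d(b_i,0) = \min(|b_i|,\, 1-|b_i|) \) is attained by its first argument, so that \( d(b_i,0) = |b_i| = d'(b_i,0) \). Summing over agents then gives \( \soc_b(0) = \sum_{i\in N} d(b_i,0) = \sum_{i\in N} |b_i| = \sum_{i\in N} d'(b_i,0) = \soc'_b(0) \), which is the second claim.

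The proof carries no real obstacle; the only thing to be careful about is invoking the coordinate range correctly, since it is precisely the fact that all points of the cycle have absolute value at most \( 1/2 \) that forces the shortest path to \( 0 \) never to wrap through the cut point \( -0.5 \), making the cycle distance to \( 0 \) coincide with the straight-line distance. I would also remark that neither inequality actually uses the normalization hypotheses; their role is downstream. Normalization guarantees that \( 0 \) minimizes \( \soc_b \), so that the preserved quantity \( \soc_b(0) = \soc'_b(0) \) equals \( \opt_b \), and this is exactly what turns the distance-enlarging cut into a ratio-preserving estimate in the subsequent analysis.
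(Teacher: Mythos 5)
Your proposal is correct and follows essentially the same route as the paper: the first claim is the pointwise domination \( d \le d' \) lifted through expectations and summation, and the second is the observation that the shortest path from any point of \( \cc{-0.5}{0.5} \) to \( 0 \) never wraps through the joined endpoint, so the cut leaves \( d(\cdot,0) \) unchanged. Your version merely spells out the coordinate argument (\( |b_i| \le 1/2 \le 1-|b_i| \)) that the paper leaves implicit, and correctly notes that normalization is not needed for either claim.
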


\begin{proof}
    The social cost of a point \( v \in G \) is defined as a linear combination of distances between \( v \) and the agents' ideal points.
    This definition is extended to lotteries by taking expected values.
    The first statement holds because \( d' \) never decreases the distances between the points, as compared to \( d \),
    and therefore the values on which the social cost is based can only increase after the cut.
    The second statement follows from the fact that the cut does not alter the shortest path between \( 0 \) and any other point.
\end{proof}

Based on \cref{lem:rozciecie} and \cref{rem:apx-mieszania}, we can establish the following upper bound on the approximation ratio of the \( RD+PCD \) mechanism:
\begin{corollary}\label{cor:szacowanie-apx}
    For any normalized profile \( b \), the approximation ratio of the \( RD+PCD \) mechanism is bounded by
    \begin{equation*}
        \begin{split}
            \phi(b):=\frac{\soc'_b(RD(b)) + \soc'_b(PCD(b))}{2\soc'_b(0)}.
        \end{split}
    \end{equation*}
\end{corollary}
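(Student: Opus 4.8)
The plan is to start from the exact expression for the approximation ratio of the mixture on a single profile and to replace each of its ingredients by a post-cut counterpart using the two claims of \cref{lem:rozciecie}, arranging matters so that the numerator can only grow while the denominator stays fixed.

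First I would rewrite the quantity to be bounded. By \cref{rem:apx-mieszania} together with the single-profile identity \( \apx_M(b) = \soc_b(M(b))/\opt_b \) (valid since we assume \( \opt_b > 0 \)), and using the linearity of the social cost, we obtain
\begin{equation*}
    \apx_{RD+PCD}(b) = \frac{\apx_{RD}(b) + \apx_{PCD}(b)}{2} = \frac{\soc_b(RD(b)) + \soc_b(PCD(b))}{2\,\opt_b}.
\end{equation*}

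Next I would treat the two parts of this fraction separately. For the denominator: because \( b \) is normalized, point \( 0 \) minimizes the social cost, so \( \opt_b = \soc_b(0) \); and by the second (``moreover'') claim of \cref{lem:rozciecie} the cut preserves the cost at \( 0 \), i.e. \( \soc_b(0) = \soc'_b(0) \). Combining these gives \( \opt_b = \soc'_b(0) \) exactly, so the denominator may be replaced by its post-cut form with no loss. For the numerator, I would apply the first claim of \cref{lem:rozciecie} separately to the lotteries \( l = RD(b) \) and \( l = PCD(b) \), yielding \( \soc_b(RD(b)) \le \soc'_b(RD(b)) \) and \( \soc_b(PCD(b)) \le \soc'_b(PCD(b)) \). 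Substituting the denominator equality and these two numerator inequalities into the display above produces \( \apx_{RD+PCD}(b) \le \phi(b) \), which is the claim.

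The delicate point — and the reason the normalization hypothesis is indispensable here — lies entirely in the denominator. Replacing \( d \) by \( d' \) can only enlarge social costs, so the post-cut infimum \( \opt'_b = \inf_v \soc'_b(v) \) may exceed \( \opt_b \); had I used \( \opt'_b \) as the denominator, the resulting fraction could be strictly smaller than the true ratio and hence fail to be a valid upper bound. The fix, already encoded in the statement of the corollary, is to keep the specific point \( 0 \) in the denominator rather than the post-cut infimum: normalization pins \( 0 \) as the minimizer so that \( \soc_b(0) = \opt_b \), and cost-preservation at \( 0 \) gives \( \soc'_b(0) = \soc_b(0) \). This is the only step where both claims of \cref{lem:rozciecie} and the normalization assumption are used simultaneously; everything else is a direct substitution, so I expect no further obstacle.
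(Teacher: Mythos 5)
Your proposal is correct and follows essentially the same route as the paper's own proof: express $\apx_{RD+PCD}(b)$ via \cref{rem:apx-mieszania} as $\bigl(\soc_b(RD(b))+\soc_b(PCD(b))\bigr)/(2\opt_b)$, identify $\opt_b$ with $\soc_b(0)=\soc'_b(0)$ using normalization and the second claim of \cref{lem:rozciecie}, and bound the numerator termwise using the first claim. Your added remark on why the denominator must be the fixed point $0$ rather than the post-cut infimum is a correct elaboration of the paper's motivation but does not change the argument.
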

\begin{proof}
    \begin{equation*}
        \begin{split}
            \apx_{RD+PCD}(b) = \frac{\apx_{RD}(b) + \apx_{PCD}(b)}{2} =
            \frac{\soc_b(RD(b)) + \soc_b(PCD(b))}{2\opt_b} = \\
            \frac{\soc_b(RD(b)) + \soc_b(PCD(b))}{2\soc_b(0)} \le
            \frac{\soc'_b(RD(b)) + \soc'_b(PCD(b))}{2\soc'_b(0)}=\phi(b).\qedhere
        \end{split}
    \end{equation*}
\end{proof}

% MARK: Auxiliary estimations of social costs on the line segment
\subsection{Estimations of the social cost after the cut}\label{sec:auxiliary-estimations}

In the previous subsection we bounded the value of the approximation ratio of the \( RD+PCD \) mechanism by the expression \( \phi(b) \),
which depends on \( \soc'_{RD}(b), \soc'_{PCD}(b) \): the social costs of outputs of \( RD \) and \( PCD \) after the cut.
In this subsection we provide concise forms for these quantities.
Since deriving those values involves only algebraic manipulations, we postpone the detailed proofs to the appendix.

\begin{restatable}{lemma}{lemkosztRDnasciezce}\label{lem:koszt-RD-na-sciezce}
    For any normalized profile \( b \), the social cost of the outcome of mechanism \( RD \) after the cut is equal to:
    \begin{equation*}
        \soc'_{b}(RD(b)) = \sum_i \frac{4|i|}{2k+1}|b_i|.
    \end{equation*}
\end{restatable}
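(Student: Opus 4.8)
The plan is to unfold both definitions and reduce the claim to a standard identity about sums of pairwise absolute differences of a sorted sequence.

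First I would expand \( \soc'_b(RD(b)) \). By definition the \( RD \) lottery places probability \( |b^{-1}(v)|/(2k+1) \) on each point \( v \), so the cost after the cut incurred by agent \( i \) is \( c'_{b_i}(RD(b)) = \sum_{v} \frac{|b^{-1}(v)|}{2k+1}\, d'(b_i, v) = \frac{1}{2k+1}\sum_{j \in N} |b_j - b_i| \), using \( d'(v_1,v_2) = |v_2 - v_1| \). Summing over all agents gives the compact double sum
\[
\soc'_b(RD(b)) = \frac{1}{2k+1}\sum_{i \in N}\sum_{j \in N}|b_i - b_j|.
\]
It therefore suffices to show that this double sum equals \( 4\sum_i |i|\,|b_i| \).

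Second, I would exploit normalization. Since \( b \) is non-decreasing in the agent index, the reports satisfy \( b_{-k} \le \cdots \le b_0 = 0 \le \cdots \le b_k \), so \( \sum_{i,j}|b_i - b_j| = 2\sum_{i<j}(b_j - b_i) \). Collecting the coefficient of a fixed \( b_p \) in the latter sum, it appears with a plus sign once for each index below \( p \) (there are \( p+k \) of them) and with a minus sign once for each index above \( p \) (there are \( k-p \) of them), giving net coefficient \( (p+k)-(k-p) = 2p \). Hence \( \sum_{i,j}|b_i - b_j| = 4\sum_p p\, b_p \), and dividing by \( 2k+1 \) already yields \( \soc'_b(RD(b)) = \sum_p \frac{4p}{2k+1}\, b_p \).

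Finally, I would remove the signs. Because \( b_0 = 0 \) and the reports are non-decreasing, \( b_p \ge 0 \) whenever \( p > 0 \) and \( b_p \le 0 \) whenever \( p < 0 \); in either case \( p\, b_p = |p|\,|b_p| \) (and both sides vanish for \( p = 0 \)). Substituting termwise gives exactly the claimed formula. I do not expect a serious obstacle, as the content is genuinely a computation; the only place demanding care is the bookkeeping of the telescoped coefficients together with the sign conversion, both of which rely essentially on the normalization hypothesis. It is normalization that guarantees the reports are sorted in index order (so the telescoping applies) and that the sign of \( b_p \) matches the sign of \( p \) (so that \( p\, b_p = |p|\,|b_p| \)). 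I would also note that reducing to a one-dimensional distance is legitimate precisely because we work after the cut: all reports lie in \( \cc{-0.5}{0.5} \) and \( d' \) is the ordinary line distance there, so the double sum is a genuine sum of distances.
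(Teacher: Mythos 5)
Your proposal is correct, and it reaches the identity by a genuinely more direct route than the paper. You expand \( \soc'_b(RD(b)) \) as the full double sum \( \frac{1}{2k+1}\sum_{i\in N}\sum_{j\in N}|b_i-b_j| \) and then collect the coefficient of each \( b_p \) using the standard fact that for a sorted sequence the sum of pairwise absolute differences telescopes into \( 2\sum_{i<j}(b_j-b_i) \); the coefficient count \( (p+k)-(k-p)=2p \) and the final sign conversion \( p\,b_p=|p|\,|b_p| \) (valid because normalization forces \( b_p \) to share the sign of \( p \) once \( b_0=0 \)) are both correct. The paper instead anchors everything at agent \( 0 \): it first proves an auxiliary lemma giving the increment \( \soc'_b(b_i)-\soc'_b(b_{i-1})=(2i-1)(|b_i|-|b_{i-1}|) \), telescopes these to get \( \soc'_b(b_i)-\soc'_b(b_0) \), and only then reorders the summation over \( i \). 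The two computations are morally the same coefficient-collection, but the paper's detour through the incremental-difference lemma is not wasted work: that same lemma (and its mirror version for negative indices) is the engine of the companion computation for \( \soc'_b(PCD(b)) \), where the \( PCD \) lottery weights are differences \( |b_{j-k-1}|-|b_{j-k}| \) and the increments appear natively. Your approach buys a shorter, self-contained proof of this one lemma at the cost of not setting up machinery reusable for the \( PCD \) case; both are valid, and your bookkeeping (including the observation that after the cut \( d' \) is the genuine line distance on \( \cc{-0.5}{0.5} \), so the double sum is a legitimate sum of distances) checks out.
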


\begin{restatable}{lemma}{lemkosztPCDnaodcinku}\label{lem:koszt-PCD-na-odcinku}
    For any normalized profile \( b \) the social cost of the outcome of mechanism \( PCD \) after the cut is equal to:
    \begin{equation*}
        \begin{split}
            \soc'_{b}(PCD(b)) =
            \sum_{j > 0} |b_j| (2k+1 - 2j)(|b_{j-k-1}| - |b_{j-k}|) +\\
            \sum_j|b_j|
            + \sum_{j < 0} |b_j| (2k+1 + 2j)(|b_{j+k+1}| - |b_{j+k}|).
        \end{split}
    \end{equation*}
\end{restatable}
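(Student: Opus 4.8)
The plan is to make the PCD lottery fully explicit on a normalized profile, rewrite the (post-cut) social cost as a weighted sum of pairwise linear distances, and then reduce the claimed closed form to a single algebraic identity that I settle by summation by parts.

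First I would pin down the lottery. For a normalized profile the clockwise order $\prec$ is the natural order of agents, so the agent $k$ positions after $i$ sits at a position cyclically adjacent to the agent $k$ positions before $i$; hence the arc opposing agent $i$ collapses to a single inter-report gap, giving a bijection between agents and the $2k+1$ consecutive gaps on the cycle. Reading off the gap lengths from the sign-split, non-decreasing structure $b_{-k}\le\cdots\le b_{-1}\le 0\le b_1\le\cdots\le b_k$ (with $b_0=0$), I obtain that PCD selects $b_i$ with probability $w_i$, where $w_i=b_{i-k}-b_{i-k-1}=|b_{i-k-1}|-|b_{i-k}|$ for $i>0$, $w_i=b_{i+k+1}-b_{i+k}=|b_{i+k+1}|-|b_{i+k}|$ for $i<0$, and $w_0=1-(b_k-b_{-k})$; these are non-negative and sum to $1$ because the gaps partition the cycle. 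Note the cut changes only how distances are measured, not the lottery, so these $w_i$ are the genuine PCD probabilities.

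Since after the cut the cost of point $b_i$ to agent $j$ is $|b_i-b_j|$, I would then write $\soc'_b(PCD(b))=\sum_i w_i S_i$ with $S_i=\sum_j|b_i-b_j|$, the linear social cost of placing the facility at $b_i$. Using $\sum_i w_i=1$ and $S_0=\sum_j|b_j|$ (exactly the middle term of the claimed formula), this becomes $\soc'_b(PCD(b))=S_0+\sum_{i\neq 0}w_i(S_i-S_0)$. From the sorted order I can evaluate $S_i-S_0$ in closed form; for $i>0$ it equals $(2i-1)b_i-2\sum_{0\le j<i}b_j$, and symmetrically for $i<0$. It then remains to prove the identity
\begin{equation*}
    \sum_{i\neq 0} w_i (S_i - S_0) = \sum_{i>0}|b_i|(2k+1-2i)w_i + \sum_{i<0}|b_i|(2k+1+2i)w_i ,
\end{equation*}
after which substituting back the telescoping forms of $w_i$ yields precisely the statement of the lemma.

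The heart of the argument, and the step I expect to be the main obstacle, is this last identity. Its left side pairs each gap on one side of $0$ with an expression in the reports on the other side, whereas its right side pairs each report magnitude with a gap on the opposite side; the two are exchanged by Abel summation. Concretely, writing $a_i=b_{i-k}$ so that $w_i=a_i-a_{i-1}$ on the positive branch, summation by parts moves the finite difference onto the bracketed prefix-sum expression $B_i=(2i-1)b_i-2\sum_{0\le j<i}b_j$; the key simplification is $B_{i+1}-B_i=(2i+1)(b_{i+1}-b_i)$ together with the vanishing boundary term $a_k=b_0=0$, which turns the positive part of the left side into the negative part of the right side, and symmetrically. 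Care is needed with the sign conventions ($|b_i|$ versus $b_i$ on each side of $0$), with the boundary indices of the telescoping sums, and with the special wraparound weight $w_0$. As a correctness check before presenting the general manipulation, I would confirm the formula on the smallest cases $k=1$ and $k=2$.
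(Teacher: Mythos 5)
Your proposal is correct and follows essentially the same route as the paper: identify the PCD gap probabilities on a normalized profile, reduce to the line-segment social costs $S_i$ via the closed forms for $S_i-S_0$ and the adjacent differences $S_{i+1}-S_i$ (the paper's auxiliary lemma), and exchange the positive and negative branches by Abel summation. The only difference is cosmetic bookkeeping — you subtract $S_0$ first and let the telescoping of the weights absorb the boundary and wraparound terms, whereas the paper substitutes indices and groups by $|b_j|$ directly — but the underlying computation is identical.
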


% MARK: Decreasing number of non-border agents submissions
\subsection{Estimating $\phi(b)$}\label{sec:do-brzegu}

Substituting the results from \cref{sec:auxiliary-estimations} into the expression for \( \phi(b) \), allows us to express it as
a fraction with numerator:
\begin{equation}\label{eq:phi-numerator}
    \begin{split}
        \sum_i \frac{4|i|}{2k+1}|b_i| +
        \sum_j|b_j| +
        \sum_{j > 0} |b_j| (2k+1 - 2j)(|b_{j-k-1}| - |b_{j-k}|) + \\
        \sum_{j < 0} |b_j| (2k+1 + 2j)(|b_{j+k+1}| - |b_{j+k}|)
    \end{split}
\end{equation}
and denominator:
\begin{equation}\label{eq:phi-denominator}
    2\soc'_b(0) = 2\sum_i d(b_i,b_0)=2\sum_i |b_i-0|=2\sum_i |b_i|.
\end{equation}
In this section, we bound the value of \( \phi(b) \) by \( 7/4 \) for any normalized profile \( b \). To achieve this, we break the proof into several steps, stated as lemmas. We provide some intuition behind each lemma and defer the detailed proofs to the appendix.

We begin by extending the set of considered profiles, as the current set (the normalized profiles) is not easily described due to the requirement that point \( 0 \) minimizes the social cost. To address this, we define a larger set, \( B'' \), which contains all the non-decreasing functions from the set of agents to \( \cc{-0.5}{0.5} \), except the constant \( \bm{0} \) function, such that \( b''_0 = 0 \). One can view the elements of \( B'' \) as profiles on a closed line segment of length \( 1 \), where agent \( 0 \) reports the point \( 0 \).

We extend the definition of \( \phi \) to be valid on \( B'' \) and proceed to estimating the supremum of \( \phi \) over \( B'' \). This approach provides an upper bound for the supremum over the normalized profiles, as \( B'' \) is a larger set.

A challenging aspect of estimating the supremum of \( \phi \) over \( B'' \) is that \( B'' \) is not a closed set, because it excludes \( \bm{0} \). To address this, we begin by proving that \( \phi \) attains small values in the neighborhood of \( \bm{0} \).

We introduce the concept of a \emph{dominated} function in \( B'' \): a function \( b \in B'' \) is dominated if it equals zero for at least \( k+1 \) agents. Let \( B_0'' \subseteq B'' \) denote the subset of functions in \( B'' \) that are dominated.

\begin{lemma}\label{lem:szacowanie-zdominowanych}
    For any dominated \( b \in B_0'' \), it holds that \( \phi(b) \le 3/2-1/n \).
    Moreover, the equality can be achieved as witnessed by the function \( b^@\in B_0'' \) corresponds to tuple \( (0,0,\ldots,0.5) \)
    i.e., the function getting value \( 0 \) for every agent except agent \( k \) for which it gets value \( 0.5 \).
\end{lemma}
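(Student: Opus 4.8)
The plan is to combine the explicit formulas from \cref{lem:koszt-RD-na-sciezce} and \cref{lem:koszt-PCD-na-odcinku} with the structural consequences of domination. Since $b \in B_0''$ is non-decreasing with $b_0 = 0$, the agents reporting a strictly positive point form a suffix $\{k-p+1, \dots, k\}$ of the positive indices and those reporting a strictly negative point form a prefix $\{-k, \dots, -k+m-1\}$ of the negative indices, where $p$ and $m$ count the positive and negative reporters. Domination means at least $k+1$ agents report $0$, i.e.\ $p + m \le k$.

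First I would show that both ``cross'' sums in the PCD formula vanish. Consider the term for $j > 0$: it is nonzero only if $|b_j| > 0$ and $|b_{j-k-1}| - |b_{j-k}| > 0$. The first condition forces $j \ge k - p + 1$, while the second forces $b_{j-k-1} \neq 0$, so that the negative-side index $j-k-1$ lies in the support $\{-k,\dots,-k+m-1\}$, giving $j \le m$. Together these yield $p + m \ge k+1$, contradicting domination; hence the $j>0$ sum is $0$, and by the symmetric argument so is the $j<0$ sum. Consequently $\soc'_b(PCD(b)) = \sum_j |b_j|$.

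Next, substituting $\soc'_b(RD(b)) = \sum_i \tfrac{4|i|}{2k+1}|b_i|$ and the simplified PCD cost into the definition of $\phi$ from \cref{cor:szacowanie-apx}, and using the denominator $2\sum_i|b_i|$ from \eqref{eq:phi-denominator}, the expression collapses to
\[
\phi(b) = \frac{1}{2} + \frac{2}{2k+1}\cdot\frac{\sum_i |i|\,|b_i|}{\sum_i |b_i|}.
\]
The fraction on the right is a weighted average of the values $|i|$ with nonnegative weights $|b_i|$, not all zero since $\bm{0} \notin B''$; hence it is bounded above by $\max_i |i| = k$. This gives $\phi(b) \le \tfrac{1}{2} + \tfrac{2k}{2k+1} = \tfrac{3}{2} - \tfrac{1}{2k+1} = \tfrac{3}{2} - \tfrac{1}{n}$.

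Finally I would verify the equality case. The weighted average equals $k$ exactly when all the weight $|b_i|$ is carried by agents with $|i| = k$, i.e.\ agents $\pm k$; the witness $b^@$ corresponding to $(0,\dots,0,0.5)$ does this, and a direct substitution gives $\soc'_{b^@}(RD(b^@)) = \tfrac{2k}{2k+1}$, $\soc'_{b^@}(PCD(b^@)) = \tfrac{1}{2}$, and $2\sum_i|b^@_i| = 1$, so $\phi(b^@) = \tfrac{3}{2} - \tfrac{1}{n}$. The main obstacle is the first step: establishing that the PCD cross-terms vanish requires the careful support and index bookkeeping enabled by the bound $p+m \le k$, after which the remainder is an elementary weighted-average estimate.
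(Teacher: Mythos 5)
Your proposal is correct and follows essentially the same route as the paper's proof: both arguments show that domination forces the two PCD cross-sums to vanish (yours by counting positive/negative reporters with $p+m\le k$, the paper's via the contiguous block of at least $k+1$ zeros), reduce the PCD cost to $\sum_j|b_j|=D/2$, bound the RD part using $|i|\le k$, and check tightness at $b^@$. The only cosmetic difference is that you package the RD bound as a weighted average of $|i|$, while the paper bounds the numerator term directly; the estimate is identical.
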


% \TODO{Poniżej intuicja do dowodu. Czy warto ją zostawiać?}
% Proof of the above lemma is based on the fact that \( \phi \) corresponds to average of social costs of \( RD \) and \( PCD \) mechanisms after the cut.
% Social cost of the \( RD \) mechanism after the cut on functions from \( B'' \) corresponds to social cost of the \( RD \) mechanism on the line segment.
% Therefore this value can be easily proven to be bounded by \( (2-2/n)\opt_b' \), and the bound is tight for \( b^@ \).
% As for the part corresponding to the \( PCD \) mechanism, we show that its social cost after the cut is equal \( \soc_b(0)=\opt_b' \) for any function from \( B'' \).
% This intuitively holds because more then half values of \( b \) are equal to \( 0 \) and therefore the length \quo{opposing arcs} facing any other value has length \( 0 \).

Let us define function \( w:B'' \to \mathbb{N} \) which, for any \( b \in B'' \), returns the number of distinct values of \( b \) that are not equal to \( -0.5 \), \( 0 \), or \( 0.5 \). Formally,
\[
    w(b) = \left| \Ima b \setminus \{-0.5, 0, 0.5\} \right|.
\]
We call a function \( b \in B'' \) \emph{boundary} if \( w(b) = 0 \). Let \( B^* \) denote the set of all boundary functions in \( B'' \).

We demonstrate that the supremum of \( \phi \) over \( B'' \) is the same as its supremum over the smaller set \( B^* \). To establish this, we start with a lemma that allows us to reduce the number of non-boundary values without decreasing the value of \( \phi \).

\begin{restatable}{lemma}{lemSupBStar}\label{lem:supremum-na-B*}
    Let \( b \in B'' \) be a function with non-boundary values, i.e., \( w(b) > 0 \). There exists \( b' \in B'' \) such that
    \( w(b') < w(b) \) and \( \phi(b') \geq \phi(b) \) holds.
\end{restatable}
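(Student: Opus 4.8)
The plan is to take a single non-boundary value of \( b \), move all agents reporting it together, and show that \( \phi \) is monotone in this one-parameter move, so that its maximum is attained when the moving block merges with a neighbouring value --- which removes one non-boundary value. Fix \( t \in \Ima b \setminus \{-0.5,0,0.5\} \); since the reflection \( i \mapsto -i \), \( b_i \mapsto -b_{-i} \) preserves \( B'' \), \( w \), and \( \phi \) (by inspection of \eqref{eq:phi-numerator} and \eqref{eq:phi-denominator}), I may assume \( t>0 \). Because \( b \) is non-decreasing and \( b_0=0<t \), the set \( S=b^{-1}(t) \) is a contiguous block of agents with positive indices, so \( 0\notin S \). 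Let \( t^- \) be the largest reported value strictly below \( t \) (so \( 0 \le t^- < t \)) and let \( t^+ \) be the smallest reported value strictly above \( t \), or \( 0.5 \) if none exists. For \( x\in[t^-,t^+] \) let \( b^{(x)} \) be the profile obtained from \( b \) by letting every agent in \( S \) report \( x \); then \( b^{(x)}\in B'' \) (it is non-decreasing, keeps \( b^{(x)}_0=0 \), and stays in \( \cc{-0.5}{0.5} \)), and \( b^{(t)}=b \).

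The key step is that, as functions of \( x \), both the numerator \eqref{eq:phi-numerator} and the denominator \eqref{eq:phi-denominator} are \emph{affine}. The denominator is \( 2\sum_i|b^{(x)}_i| = 2|S|\,x + 2c \) with \( c=\sum_{i\notin S}|b_i|\ge 0 \), affine with positive slope. A quadratic term could only arise from a product in \eqref{eq:phi-numerator} having both factors in \( S \); but every such product couples an index \( j \) with one of \( j\pm k \) or \( j\pm(k+1) \), i.e. two indices differing by at least \( k \), whereas the contiguous block \( S\subseteq\{1,\dots,k\} \) spans fewer than \( k \) indices and hence contains no such pair. Thus at most one factor of each product varies with \( x \), the numerator is affine, \( \text{Num}(b^{(x)})=N_1x+N_0 \), and \( \phi(b^{(x)})=(N_1x+N_0)/(2|S|x+2c) \) is linear-fractional. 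Its only pole, \( x_0=-c/|S|\le 0 \), lies at or below \( [t^-,t^+]\subseteq[0,0.5] \), so \( \phi(b^{(x)}) \) is monotone on \( [t^-,t^+] \) and attains its maximum at an endpoint.

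I then set \( b' \) to be whichever of \( b^{(t^-)} \), \( b^{(t^+)} \) gives the larger value of \( \phi \). At either endpoint the block \( S \) coincides with an adjacent reported value, or becomes the boundary value \( 0.5 \); in all cases the non-boundary value \( t \) disappears from the image, so \( w(b')=w(b)-1<w(b) \), while monotonicity gives \( \phi(b')\ge\phi(b^{(t)})=\phi(b) \). The only feasibility subtlety is the degenerate case where the maximizing endpoint is \( t^-=0 \) with \( S \) comprising all nonzero agents, so that \( b^{(0)}=\bm{0}\notin B'' \); but then \( c=0 \), forcing \( N_0=\text{Num}(b^{(0)})=0 \), so \( \phi \) is the constant \( N_1/(2|S|) \) on \( (0,t^+] \) and I may instead take \( b'=b^{(t^+)}\in B'' \) with \( \phi(b')=\phi(b) \). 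The main obstacle is the affineness claim of the second paragraph: it is what removes the apparent nonlinearity of \( \phi \) and reduces the optimisation to the endpoints, and it rests on the structural fact that the interaction terms of \eqref{eq:phi-numerator} only couple indices that are at least \( k \) apart, while any block of equal reports lies on one side of \( 0 \) and spans fewer than \( k \) indices.
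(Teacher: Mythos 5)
Your proposal is correct and follows essentially the same route as the paper's proof: both replace the block of agents reporting a single non-boundary value by a parameter \( x \), observe that the interaction terms in \eqref{eq:phi-numerator} only couple indices at least \( k \) apart (so the block, lying strictly on one side of agent \( 0 \), contributes no quadratic term), conclude that \( \phi \) is linear-fractional and hence monotone on the interval between the neighbouring values, and push \( x \) to an endpoint. The only cosmetic differences are that the paper argues by contradiction and disposes of the degenerate all-zeros endpoint by treating dominated profiles separately via \cref{lem:szacowanie-zdominowanych}, whereas you handle that single degenerate case directly by noting \( \phi \) is then constant on the punctured interval.
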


The proof proceeds as follows: we consider a pair of functions \( b_1, b_2 \in B'' \), constructed by replacing one non-boundary value in \( b \) (for all agents reporting it) with either the value reported by other agents or with a boundary value. By construction, both \( b_1 \) and \( b_2 \) have smaller \( w \)-values than \( b \). To conclude the proof, we show that \( \phi \) for at least one of these modified functions is not smaller than \( \phi(b) \).

By repeatedly applying the above lemma, we can iteratively reduce the number of non-boundary values in any \( b \in B'' \) until \( w(b) = 0 \), while ensuring that \( \phi \) does not decrease. Thus, we have:
\begin{corollary}\label{cor:supremum-na-B*}
    For any element \( b \in B'' \), there exists a boundary element \( b^* \in B^* \) such that \( \phi(b) \leq \phi(b^*) \).
\end{corollary}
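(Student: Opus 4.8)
The plan is to prove the corollary by a straightforward induction on the nonnegative integer \( w(b) \), with \cref{lem:supremum-na-B*} serving as the single engine of the argument. Since \( w \) takes values in \( \mathbb{N} \), this induction is well-founded and must terminate. The base case is \( w(b) = 0 \): here \( b \) is by definition already a boundary function, so \( b \in B^* \), and taking \( b^* = b \) trivially gives \( \phi(b) \le \phi(b^*) \).

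For the inductive step I would suppose \( w(b) = m > 0 \) and assume the statement holds for every function in \( B'' \) whose \( w \)-value is strictly less than \( m \). By \cref{lem:supremum-na-B*}, there exists \( b' \in B'' \) with \( w(b') < w(b) = m \) and \( \phi(b') \ge \phi(b) \). Applying the induction hypothesis to \( b' \) yields a boundary function \( b^* \in B^* \) satisfying \( \phi(b') \le \phi(b^*) \). Chaining the two inequalities gives \( \phi(b) \le \phi(b') \le \phi(b^*) \), which is exactly the desired claim. Equivalently, one may phrase this without invoking an induction hypothesis by simply iterating \cref{lem:supremum-na-B*}: starting from \( b \), repeatedly produce a function of strictly smaller \( w \)-value and no smaller \( \phi \), obtaining a finite chain \( b = b^{(0)}, b^{(1)}, \ldots, b^{(t)} \) with \( w(b^{(0)}) > w(b^{(1)}) > \cdots > w(b^{(t)}) = 0 \) and \( \phi(b) \le \phi(b^{(1)}) \le \cdots \le \phi(b^{(t)}) \), so that \( b^* := b^{(t)} \in B^* \) works.

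The only point requiring care is termination, and this is immediate: each application of \cref{lem:supremum-na-B*} strictly decreases the nonnegative integer \( w \), so the process halts after at most \( w(b) \) steps, having reached \( B^* \) without ever decreasing \( \phi \). I do not expect any genuine obstacle in this corollary — the substantive content is entirely confined to \cref{lem:supremum-na-B*}, where one must show that a single non-boundary value can be eliminated (by collapsing it onto an adjacent reported value or onto one of \( -0.5, 0, 0.5 \)) without decreasing \( \phi \). The corollary merely packages the repeated invocation of that reduction, and the \emph{only} modeling subtlety worth flagging is confirming that the intermediate functions \( b^{(j)} \) indeed remain in \( B'' \) (non-decreasing, valued in \( \cc{-0.5}{0.5} \), non-constant, with the entry for agent \( 0 \) equal to \( 0 \)), which is guaranteed by the fact that \cref{lem:supremum-na-B*} returns an element of \( B'' \) at every step.
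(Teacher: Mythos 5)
Your proof is correct and matches the paper's argument exactly: the paper likewise obtains the corollary by iterating \cref{lem:supremum-na-B*} until \( w \) reaches \( 0 \), with termination guaranteed because \( w \) is a strictly decreasing nonnegative integer. Your inductive phrasing and the remark that intermediate functions stay in \( B'' \) are sound refinements of the same idea.
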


From \cref{cor:supremum-na-B*}, we immediately deduce:
\begin{equation*}
    \sup_{b \in B''} \phi(b) \leq \sup_{b \in B^*} \phi(b).
\end{equation*}
Finally, we bound the supremum of \( \phi \) over \( B^* \) by \( 7/4 \).

\begin{restatable}{lemma}{lemApxNaBStar}\label{lem:apx-na-B*}
    For any \( b^* \in B^* \), the value of \( \phi \) for \( b^* \) is at most \( 7/4 \).
\end{restatable}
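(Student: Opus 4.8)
The plan is to parametrize the boundary profiles explicitly and reduce \( \phi \) to an elementary function of two integers. Since \( w(b^*)=0 \), every agent in a boundary profile reports a point of \( \{-0.5,0,0.5\} \); combined with the facts that \( b^* \) is non-decreasing and \( b^*_0=0 \), such a profile is completely described by the count \( p \) of agents reporting \( -0.5 \) (necessarily the agents \( -k,\ldots,-k+p-1 \)) and the count \( q \) of agents reporting \( 0.5 \) (the agents \( k-q+1,\ldots,k \)), with \( 0\le p,q\le k \) and \( (p,q)\neq(0,0) \); all remaining agents, including agent \( 0 \), report \( 0 \). In particular \( |b^*_i|=1/2 \) precisely at the \( p+q \) extreme indices and \( |b^*_i|=0 \) otherwise, so by \cref{eq:phi-denominator} the denominator is immediate: \( 2\soc'_{b^*}(0)=2\sum_i|b^*_i|=p+q \).

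Next I would substitute this structure into the two social-cost formulas. For the \( RD \) term, \cref{lem:koszt-RD-na-sciezce} gives \( \soc'_{b^*}(RD(b^*))=\sum_i\frac{4|i|}{2k+1}|b^*_i| \), and summing over the extreme indices yields, via the identity \( k(k+1)-(k-p)(k-p+1)=p(2k+1-p) \), the closed form \( \frac{p(2k+1-p)+q(2k+1-q)}{2k+1} \). For the \( PCD \) term I would exploit that \( |b^*_i| \) is a step function: the telescoping differences \( |b^*_{j-k-1}|-|b^*_{j-k}| \) and \( |b^*_{j+k+1}|-|b^*_{j+k}| \) appearing in \cref{lem:koszt-PCD-na-odcinku} vanish except at the single index where \( |b^*| \) jumps between \( 1/2 \) and \( 0 \), namely \( j=p \) in the first sum and \( j=-q \) in the third. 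Collapsing each double sum to its one surviving term gives \( \soc'_{b^*}(PCD(b^*))=\tfrac12(p+q) \) when \( p+q\le k \) (there the jump index carries \( |b^*_p|=|b^*_{-q}|=0 \), killing both boundary terms), and \( \soc'_{b^*}(PCD(b^*))=\tfrac{2k+1}{2} \) when \( p+q\ge k+1 \) (there \( |b^*_p|=|b^*_{-q}|=1/2 \), and the two boundary terms recombine with the middle sum).

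With these closed forms, \( \phi(b^*) \) is an explicit function of \( p,q,k \), and I would conclude by a case split on \( p+q \). When \( p+q\le k \), the trivial bound \( p(2k+1-p)\le p(2k+1) \) gives \( \soc'_{b^*}(RD(b^*))\le p+q \), so the numerator of \( \phi \) is at most \( \tfrac32(p+q) \) and \( \phi(b^*)\le\tfrac32\le\tfrac74 \). When \( p+q\ge k+1 \), write \( s=p+q \); since \( x\mapsto x(2k+1-x) \) is concave, \( p(2k+1-p)+q(2k+1-q)\le 2\cdot\tfrac{s}{2}(2k+1-\tfrac{s}{2}) \), whence \( \soc'_{b^*}(RD(b^*))\le s-\frac{s^2}{2(2k+1)} \) and \( \phi(b^*)\le 1-\frac{s}{2(2k+1)}+\frac{2k+1}{2s} \). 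Setting \( t=s/(2k+1) \), the right-hand side equals \( h(t):=1-t/2+1/(2t) \), which is strictly decreasing for \( t>0 \); since \( s\ge k+1 \) forces \( t>1/2 \), we obtain \( \phi(b^*)<h(1/2)=7/4 \).

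I expect the delicate step to be the \( PCD \) reduction: one must argue carefully that, for a non-decreasing step-valued profile, each of the two double sums in \cref{lem:koszt-PCD-na-odcinku} collapses to exactly one surviving term, correctly identify that term's index (\( j=p \), resp.\ \( j=-q \)) together with the value of \( |b^*_p| \), resp.\ \( |b^*_{-q}| \), and verify the clean recombination to \( \tfrac{2k+1}{2} \) in the regime \( p+q\ge k+1 \). The \( RD \) computation and the final optimization are then routine; as a sanity check, the dominated extremizer \( (0,\ldots,0,0.5) \), i.e.\ \( p=0 \), \( q=1 \), gives \( \phi=\tfrac32-\tfrac1n \), matching \cref{lem:szacowanie-zdominowanych} and confirming the formulas.
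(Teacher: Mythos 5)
Your proposal is correct and follows essentially the same route as the paper's proof: parametrize boundary profiles by the two counts of agents at $\pm 0.5$, evaluate the numerator and denominator of $\phi$ in closed form, symmetrize the two counts (your Jensen step is the paper's substitution $m_1=s-q,\ m_{-1}=s+q$ with the $-4q^2$ term dropped), and finish with a single-variable monotonicity argument whose extremal value $\frac{7k^2+8k+2}{4k^2+6k+2}$ you relax slightly via $h(t)<h(1/2)=7/4$. The only cosmetic difference is that you handle the case $p+q\le k$ by direct computation rather than by invoking \cref{lem:szacowanie-zdominowanych}, and your closed form $\soc'_{b^*}(PCD(b^*))=\frac{2k+1}{2}$ in the non-dominated case matches the paper's computation exactly.
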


We achieve this by simplifying the representation of boundary functions in \( B^* \) and directly estimating the value of \( \phi \) for these functions.

% MARK: Proof of the main theorem
\subsection{Proof of the main theorem}\label{sec:proof-main-theorem}

We are now ready to prove \cref{thm:apx-1.75}.

\begin{proof}[Proof (of \cref{thm:apx-1.75})]
    Consider the mechanism \( RD+PCD \).
    By \cref{rem:wlasciwości-mieszania}, this mechanism is strategyproof.
    Now, let us show that the approximation ratio of \( RD+PCD \) mechanism on all profiles on cycle \( G \) is bounded by \( 7/4 \).
    Let \( b \) be any profile on \( G \).
    If all agents report the same point, then the approximation ratio of the \( RD+PCD \) mechanism is equal to \( 1 \), as shown in \cref{rem:apx-rdpcd-1}.
    Assume now that there are at least two distinct reports in the profile \( b \).
    By \cref{rem:wlasciwości-mieszania}, the \( RD+PCD \) mechanism is anonymous and neutral.
    Hence, by \cref{lem:normalizacja-stanow-swiata}, there exists a normalized profile, \( \tilde{b} \), on the cycle such that $\apx_{RD+PCD}(b) = \apx_{RD+PCD}(\tilde{b})$.
    By \cref{cor:szacowanie-apx} we have that $\apx_{RD+PCD}(\tilde{b}) \le \phi(\tilde{b})$.
    By \cref{cor:supremum-na-B*}, there exists a boundary function \( b^*\in B^* \)
    such that \( \phi(\tilde{b})\le\phi(b^*) \).
    Lastly, by \cref{lem:apx-na-B*}, we have that $\phi(b^*) \le 7/4$.
    Combining the above inequalities we obtain the desired bound for the approximation ratio of the \( RD+PCD \) mechanism for the profile \( b \).
    Since the profile \( b \) was arbitrary, this completes the proof.
\end{proof}

% MARK: Experiments
\section{Experiments}\label{sec:experiments}

In the previous section, we proved an upper bound of \( 7/4 \) on the approximation ratio of the \( RD + PCD \) mechanism.
On the other hand, in \cref{rem:trivial-bounds}, we established that no upper bound below \( 3/2 \) is possible, leaving some uncertainty about the actual approximation ratio of the \( RD + PCD \) mechanism.
To complement these theoretical results, we conducted numerical experiments to investigate which values of the approximation ratio are actually attained by the \( RD + PCD \) mechanism.

\subsection{Setup}
The idea of the experiments is to compute the approximation ratio of the \( RD+PCD \) mechanism for a finite, computationally feasible, set of profiles, providing insight into the behavior of the mechanism for all possible profiles.

The first obvious parameter to consider is the number of agents \( n \), as it directly affects the definition, and potentially the operation of, the \( RD+PCD \) mechanism. During the experiments, we considered the number of agents \( n \) in the range \( [2..60] \).

The second parameter to consider is the set of possible reports for each agent.
In the theoretical analysis, no restrictions were imposed on the agents' reports.
However, since cycle \( G \) is a continuous set with infinitely many points, we introduce such restrictions in the experiments in order to have a finite set of profiles to analyze.
For \( l \in \mathbb{N} \), let \( G_l \) denote the subset of \( l \) points on the cycle \( G \) (of length \( 1 \)) that are equally spaced and include the point \( 0 \).
During the experiments, we restricted agents' reports to the set \( G_l \) for \( l \in [2..60] \).
We hypothesize that profiles with agents' reports restricted to the set \( G_l \) provide increasingly better approximation of the behavior of the \( RD+PCD \) mechanism for all possible profiles, as \( l \) grows (since the allowed reports cover the cycle more densely).

The total number of distinct profiles with \( n \) agents whose reports are restricted to \( G_l \) is \( l^n \). To reduce the computational complexity, we leveraged the anonymity and neutrality of the \( RD+PCD \) mechanism, grouping profiles that would yield the same social cost and avoiding redundant calculations. Despite this, for larger values of \( n \) and \( l \), the number of profiles still exceeded \( 10^7 \), making computations infeasible.

In such cases, we restricted our analysis to profiles in which agents reported no more than \( 3 \) distinct points. These profiles were considered good candidates for generating the highest approximation ratio for the \( RD+PCD \) mechanism, as will be demonstrated later.

\subsection{Results}
\begin{figure}
    \centering
    \includegraphics[width=0.7\columnwidth]{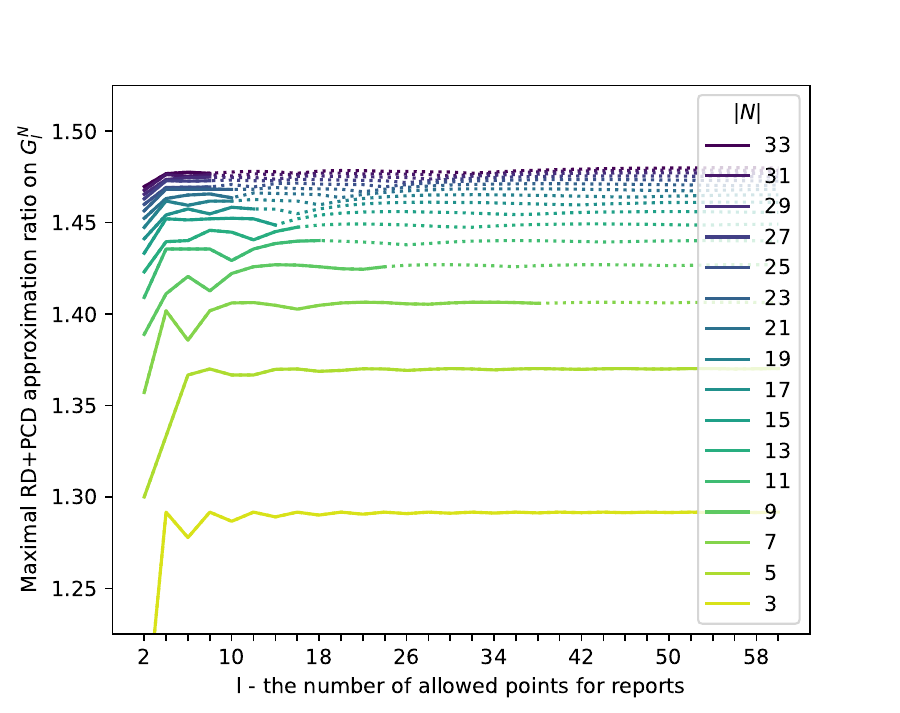}
    \caption{
        The maximum value of the approximation ratio of the \( RD+PCD \) mechanism in profiles with \( n \) agents, whose reports are restricted to \( G_k \).
        The dotted line highlights the results calculated with the restriction that the number of different points reported by agents is limited to \( 3 \).
    }\label{fig:rdpcd-num-of-points}
\end{figure}

The results of the experiments are presented in \cref{fig:rdpcd-num-of-points}. The graph illustrates the maximum value of the approximation ratio of the \( RD+PCD \) mechanism for profiles with \( n \) agents, whose reports are restricted to \( G_l \). The dotted line represents results computed under the restriction that agents report no more than \( 3 \) distinct points.

An immediate observation is that the approximation ratio of the \( RD+PCD \) mechanism does not exceed \( 3/2 \) for any profile considered in the experiments. This observation leads to the following hypothesis, which, if true, would further refine the theoretical result established in \cref{thm:apx-1.75}:
\begin{hypothesis}\label{hyp:rdpcd-1.5}
    The approximation ratio of the \( RD+PCD \) mechanism is strictly less than \( 3/2 \) for any set of agents and any profile on the cycle \( G \).
\end{hypothesis}

The bound of \( 3/2 \) is the lowest possible due to \cref{rem:apx-mieszania} and would imply that the mechanisms \( RD \) and \( PCD \) perfectly complement each other, meaning the worst-case profile for one mechanism is the best-case profile for the other.

A deeper analysis of the experimental results provides further evidence supporting \cref{hyp:rdpcd-1.5}. Indeed, we observe that for a fixed number of agents, increasing the number of points on which agents can report (i.e., increasing \( i \)) beyond a certain threshold does not affect the highest approximation ratio of the \( RD+PCD \) mechanism (excluding parity effect). This suggests that the approximation ratio of the \( RD+PCD \) mechanism does not grow significantly if we were to extend the range of \( l \).

\section{Conclusions}

We considered a strategyproof facility location on a cycle with the utilitarian welfare objectives. We showed that the approximation ratio of strategyproof mechanisms for this problem is bounded from above by $7/4$ and is guaranteed by the $RD+PCD$ mechanism. This bound improves the best previously known bound of $2 - 2/n$ in the cases of odd $n \geq 5$. Computational analysis of the approximation ratio of the $RD+PCD$ mechanism suggests that the bound could be further improved to $3/2$. Verifying this hypothesis is an~interesting question for future research.

\appendix
\section*{Acknowledgements}
This work was supported by the Polish National Science Centre through
grant 2021/42/E/HS4/00196.

\bibliographystyle{unsrtnat}
% \bibliography{references}  %%% Uncomment this line and comment out the ``thebibliography'' section below to use the external .bib file (using bibtex) .

%%% Uncomment this section and comment out the \bibliography{references} line above to use inline references.

\begin{thebibliography}{10}
    \providecommand{\natexlab}[1]{#1}
    \providecommand{\url}[1]{\texttt{#1}}
    \expandafter\ifx\csname urlstyle\endcsname\relax
        \providecommand{\doi}[1]{doi: #1}\else
        \providecommand{\doi}{doi: \begingroup \urlstyle{rm}\Url}\fi

    \bibitem[Meir(2019)]{meir2019strategyproof}
    Reshef Meir.
    \newblock Strategyproof facility location for three agents on a circle.
    \newblock In \emph{International symposium on algorithmic game theory}, pages
    18--33. Springer, 2019.

    \bibitem[Moulin(1980)]{moulin1980strategy}
    Herv{\'e} Moulin.
    \newblock On strategy-proofness and single peakedness.
    \newblock \emph{Public Choice}, 35\penalty0 (4):\penalty0 437--455, 1980.

    \bibitem[Schummer and Vohra(2002)]{schummer2002strategy}
    James Schummer and Rakesh~V Vohra.
    \newblock Strategy-proof location on a network.
    \newblock \emph{Journal of Economic Theory}, 104\penalty0 (2):\penalty0
    405--428, 2002.

    \bibitem[Procaccia and Tennenholtz(2013)]{procaccia2013approximate}
    Ariel~D Procaccia and Moshe Tennenholtz.
    \newblock Approximate mechanism design without money.
    \newblock \emph{ACM Transactions on Economics and Computation}, 1\penalty0
    (4):\penalty0 1--26, 2013.

    \bibitem[Alon et~al.(2009)Alon, Feldman, Procaccia, and
        Tennenholtz]{alon2009strategyproof}
    Noga Alon, Michal Feldman, Ariel~D Procaccia, and Moshe Tennenholtz.
    \newblock Strategyproof approximation mechanisms for location on networks.
    \newblock \emph{arXiv preprint arXiv:0907.2049}, 2009.

    \bibitem[Chan et~al.(2021)Chan, Filos-Ratsikas, Li, Li, and
        Wang]{chan2021mechanism}
    Hau Chan, Aris Filos-Ratsikas, Bo~Li, Minming Li, and Chenhao Wang.
    \newblock Mechanism design for facility location problems: A survey.
    \newblock In \emph{Proceedings of the 30th International Joint Conference on
        Artificial Intelligence}, pages 4356--4365, 2021.

    \bibitem[Dokow et~al.(2012)Dokow, Feldman, Meir, and
        Nehama]{dokow2012mechanism}
    Elad Dokow, Michal Feldman, Reshef Meir, and Ilan Nehama.
    \newblock Mechanism design on discrete lines and cycles.
    \newblock In \emph{Proceedings of the 13th ACM Conference on Electronic
        Commerce}, pages 423--440, 2012.

    \bibitem[Alon et~al.(2010)Alon, Feldman, Procaccia, and
        Tennenholtz]{alon2010walking}
    Noga Alon, Michal Feldman, Ariel Procaccia, and Moshe Tennenholtz.
    \newblock Walking in circles.
    \newblock \emph{Discrete Mathematics}, 310\penalty0 (23):\penalty0 3432--3435,
    2010.

    \bibitem[Cai et~al.(2016)Cai, Filos-Ratsikas, and Tang]{cai2016facility}
    Qingpeng Cai, Aris Filos-Ratsikas, and Pingzhong Tang.
    \newblock Facility location with minimax envy.
    \newblock In \emph{Proceedings of the Twenty-Fifth International Joint
        Conference on Artificial Intelligence}, IJCAI'16, pages 137--143. AAAI Press,
    2016.

    \bibitem[Mu'alem and Schapira(2018)]{mu2018setting}
    Ahuva Mu'alem and Michael Schapira.
    \newblock Setting lower bounds on truthfulness.
    \newblock \emph{Games and Economic Behavior}, 110:\penalty0 174--193, 2018.

\end{thebibliography}

\iftrue

    % ============================================== APPENDIX ==============================================
    \clearpage

    \section{Appendix}
    Respective sections of the appendix contain detailed proofs of the lemmas presented in the main text.
    % MARK: Appendix - cut
    \subsection{Estimations of social cost after the cut}
    % \\gl\{(.*?)\}

    Before proving \cref{lem:koszt-RD-na-sciezce} and \cref{lem:koszt-PCD-na-odcinku}, we derive some auxiliary results about the value of social cost after the cut:

    \begin{lemma}\label{różnica-sąsiednich-kosztów}
        Fix an arbitrary normalized profile \( b \) on the cycle.
        Let \( i \in N \) be any agent with \( i>0 \).
        After the cut, the following relation holds between the social costs
        associated with the reports of agents \( i, i-1 \) and \( 0 \):
        \begin{enumerate}
            \item \(\soc'_{b}(b_i) - \soc'_{b}(b_{i-1}) = (2i-1)(|b_i| - |b_{i-1}|)\).
            \item \(\soc'_{b}(b_i) - \soc'_{b}(b_{0}) = (2i-1)|b_i| - \sum_{j=1}^{i-1} 2|b_j|\)
        \end{enumerate}
    \end{lemma}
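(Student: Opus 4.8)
The plan is to prove part (1) by a direct computation from the definition of the cut social cost, and then to obtain part (2) by telescoping. Recall that after the cut the social cost of a point \( v \) is \( \soc'_b(v) = \sum_{j \in N} |b_j - v| \), since \( d'(b_j, v) = |b_j - v| \). The normalization hypotheses are exactly what make this tractable: because \( b \) is non-decreasing with \( b_0 = 0 \), every agent with index \( j \ge i \) reports a point \( b_j \ge b_i \) and every agent with index \( j \le i-1 \) reports a point \( b_j \le b_{i-1} \), so no agent's report lies strictly between \( b_{i-1} \) and \( b_i \). Moreover, for \( i > 0 \) both \( b_i \) and \( b_{i-1} \) are non-negative, so the absolute values in the statement can be introduced only at the end by replacing \( b_i - b_{i-1} \) with \( |b_i| - |b_{i-1}| \).

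For part (1), I would write
\begin{equation*}
    \soc'_b(b_i) - \soc'_b(b_{i-1}) = \sum_{j \in N} \bigl( |b_j - b_i| - |b_j - b_{i-1}| \bigr)
\end{equation*}
and evaluate each summand using the partition above. An agent with \( j \le i-1 \) satisfies \( b_j \le b_{i-1} \le b_i \), so its term equals \( (b_i - b_j) - (b_{i-1} - b_j) = b_i - b_{i-1} \); an agent with \( j \ge i \) satisfies \( b_{i-1} \le b_i \le b_j \), so its term equals \( (b_j - b_i) - (b_j - b_{i-1}) = -(b_i - b_{i-1}) \). There are \( i + k \) agents of the first kind (indices \( -k, \ldots, i-1 \)) and \( k - i + 1 \) of the second (indices \( i, \ldots, k \)), so the whole sum collapses to \( \bigl((i+k) - (k-i+1)\bigr)(b_i - b_{i-1}) = (2i-1)(b_i - b_{i-1}) \), which is the claim.

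For part (2), I would telescope part (1) over the consecutive steps from \( 0 \) to \( i \):
\begin{equation*}
    \soc'_b(b_i) - \soc'_b(b_0) = \sum_{m=1}^{i} \bigl( \soc'_b(b_m) - \soc'_b(b_{m-1}) \bigr) = \sum_{m=1}^{i} (2m-1)(|b_m| - |b_{m-1}|),
\end{equation*}
and then perform a summation by parts. After shifting the index in the \( |b_{m-1}| \) terms and using \( |b_0| = 0 \), the coefficient of \( |b_i| \) is \( 2i-1 \), while for each \( 1 \le m \le i-1 \) the coefficient of \( |b_m| \) becomes \( (2m-1) - (2m+1) = -2 \), yielding \( (2i-1)|b_i| - \sum_{j=1}^{i-1} 2|b_j| \), as required.

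The only delicate point, which I expect to be the main (though minor) obstacle, is the treatment of ties: several agents may report equal points, or some \( b_j \) may coincide with \( b_{i-1} \) or \( b_i \). This is resolved by observing that the per-agent contribution agrees across the boundary cases — for instance an agent with \( b_j = b_{i-1} \) contributes \( b_i - b_{i-1} \) whether it is classified on the left or as a boundary point — so the counts \( i+k \) and \( k-i+1 \), which depend only on the agents' indices and not on the actual reported values, remain valid regardless of how ties fall.
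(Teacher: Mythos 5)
Your proof is correct and follows essentially the same route as the paper: part (1) by splitting the agents into those with index at most \( i-1 \) and those with index at least \( i \) and counting them, and part (2) by telescoping part (1) and regrouping coefficients using \( |b_0| = 0 \). Your additional remarks on non-negativity of \( b_{i-1}, b_i \) and on ties are sound and only make explicit what the paper leaves implicit.
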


    \begin{proof}
        We start with proving the first part of the lemma.
        By the definition of the social cost after the cut, we have:
        \begin{equation*}
            \begin{split}
                \soc'_{b}(b_i) - \soc'_{b}(b_{i-1}) = \sum_{j} (d'(b_j, b_i) - d'(b_j, b_{i-1})) \overset{(1)}{=} \\
                \sum_{j \le i-1} (|b_i| - |b_{i-1}|) - \sum_{j \ge i} (|b_i| - |b_{i-1}|) \overset{(2)}{=}
                (2i-1)(|b_i| - |b_{i-1}|),
            \end{split}
        \end{equation*}
        where the equalities follow from:
        \begin{enumerate}
            \item expanding the difference \( d(b_j, b_i) - d(b_j, b_{i-1}) \) depending on whether \( j \le i-1 \),
            \item counting the number of elements in each sum.
        \end{enumerate}

        The second part of the lemma follows from the first part.
        By summing the incremental differences, we have:
        \begin{equation*}
            \begin{split}
                \soc'_{b}(b_i) - \soc'_{b}(b_{0}) = \sum_{j=1}^i (\soc'_{b}(b_j) - \soc'_{b}(b_{j-1})) \overset{(1)}{=}
                \sum_{j=1}^i (2j-1)(|b_j| - |b_{j-1}|) \overset{(2)}{=}
                (2i-1)|b_i| - \sum_{j=1}^{i-1} 2|b_j|,
            \end{split}
        \end{equation*}
        where the equalities follow from:
        \begin{enumerate}
            \item repeated application of already proven first part of the lemma,
            \item grouping terms by \( |b_{j}| \) and noting that \( |b_{0}| = 0 \).
        \end{enumerate}
    \end{proof}

    \begin{remark}\label{rem:roznica-sasiadujacych-kosztow2}
        The social cost after the cut has analogous formula for agents with indices of the same absolute value.
        Therefore, for any agent \( i<0 \) following the proof of \cref{różnica-sąsiednich-kosztów} we can establish that following holds:
        \[\soc'_{b}(b_i) - \soc'_{b}(b_{i+1}) = (-2i-1)(|b_i| - |b_{i+1}|).\]
    \end{remark}

    Using \cref{różnica-sąsiednich-kosztów} and \cref{rem:roznica-sasiadujacych-kosztow2}, we derive expressions for the social costs, in terms of the agents' placement, after the cut of outcomes of the mechanisms \( RD \) and \( PCD \), as stated in the main text.

    \begin{proof}[Proof (of \cref{lem:koszt-RD-na-sciezce})]
        Let us expand the definition of the social cost after the cut for the \( RD \) mechanism:
        \begin{equation*}
            \begin{split}
                \soc'_{b}(RD(b)) = \frac{\sum_i \soc'_{b}(b_{i})}{2k+1} =
                \frac{\sum_i (\soc'_{b}(b_{i}) - \soc'_{b}(b_{0}))}{2k+1} + \soc'_{b}(b_{0}).
            \end{split}
        \end{equation*}
        First, consider the sum in the numerator of the fraction in the above expression.
        Focus on the part for positive indices. By applying second part of \cref{różnica-sąsiednich-kosztów}, we have:
        \begin{equation*}
            \begin{split}
                \sum_{i>0}(\soc'_{b}(b_{i}) - \soc'_{b}(b_{0})) = \sum_{i>0}(2i - 1)|b_{i}| - \sum_{i>0} \sum_{j<i} 2|b_{j}| =
                \sum_{i>0}(4i - 2k - 1)|b_{i}|,
            \end{split}
        \end{equation*}
        where the last step follows from reordering the summation:
        \begin{equation*}
            \begin{split}
                \sum_{i>0} \sum_{j<i} 2|b_{j}| = \sum_{j>0} \sum_{i>j} 2|b_{j}| =
                \sum_{i>0} 2(k - i)|b_{i}|.
            \end{split}
        \end{equation*}
        Thus, the coefficient for \( |b_{i}| \) is \( 4|i| - 2k - 1 \), for \( i > 0 \).

        For negative values of \( i \), due to symmetry of social costs after the cut, the same relationship holds.
        It is also valid for \( i = 0 \), because in that case:
        \begin{equation*}
            \soc'_{b}(b_{0}) - \soc'_{b}(b_{0}) = 0 = (4i - 2k - 1) \cdot 0 = (4i - 2k - 1)|b_{0}|.
        \end{equation*}
        Hence, we have:
        \begin{equation*}
            \sum_i (\soc'_{b}(b_{i}) - \soc'_{b}(b_{0})) = \sum_i (4|i| - 2k - 1)|b_{i}|.
        \end{equation*}
        Substituting this result into the definition of the social cost after the cut for the \( RD \) mechanism gives:
        \begin{equation*}
            \begin{split}
                \soc'_b(RD(b)) = \frac{\sum_i (\soc'_{b}(b_{i}) - \soc'_{b}(b_{0}))}{2k+1} + \soc'_{b}(b_{0}) =
                \frac{\sum_i (4|i| - 2k - 1)|b_{i}| + (2k+1)\sum_i |b_{i}|}{2k+1} =
                \frac{\sum_i 4|i||b_{i}|}{2k+1}.\qedhere
            \end{split}
        \end{equation*}
    \end{proof}

    \begin{proof}[Proof (of \cref{lem:koszt-PCD-na-odcinku})]
        By directly expanding the definition of the \( PCD \) mechanism, we have:
        \begin{equation*}
            \begin{split}
                \soc'_b(PCD(b)) =
                \sum_{i < 0} \soc'_{b}(b_{i}) (|b_{i+k+1}| - |b_{i+k}|)
                + \soc'_{b}(b_{0}) (1 - |b_{k}| - |b_{-k}|)
                + \sum_{i > 0} \soc'_{b}(b_{i}) (|b_{i-k-1}| - |b_{i-k}|).
            \end{split}
        \end{equation*}
        Let us analyze the first summation in the above expression. We obtain:
        \begin{equation*}
            \begin{split}
                \sum_{i < 0} \soc'_{b}(b_{i}) (|b_{i+k+1}| - |b_{i+k}|) \overset{(1)}{=}
                \sum_{j > 0} \soc'_{b}(b_{j-k-1}) (|b_{j}| - |b_{j-1}|) \overset{(2)}{=}                       \\
                |b_{k}| \soc'_{b}(b_{0}) - |b_{0}| \soc'_{b}(b_{-k})  +
                \sum_{j > 0} |b_{j}| (\soc'_{b}(b_{j-k-1}) - \soc'_{b}(b_{j-k})) \overset{(3)}{=} \\
                |b_{k}| \soc'_{b}(b_{0}) + \sum_{j > 0} |b_{j}| (2k - 2j + 1)(|b_{j-k-1}| - |b_{j-k}|),
            \end{split}
        \end{equation*}
        where the steps follow from:
        \begin{enumerate}
            \item substituting the indices \( i = j - k - 1 \),
            \item grouping terms by \( |b_{j}| \),
            \item applying \cref{rem:roznica-sasiadujacych-kosztow2} and noting \( |b_{0}| = 0 \) (which holds because \( b \) is normalized).
        \end{enumerate}
        Applying a similar transformation to the last summation and substituting the obtained equalities into the formula for the social cost after the cut of \( PCD \) completes the proof.
    \end{proof}

    % MARK: Appendix - phi
    \subsection{Estimating $\phi(b)$}
    Recall that the expression for \( \phi \) was expanded at the beginning of \cref{sec:do-brzegu} as a fraction, with the numerator and denominator given by \cref{eq:phi-numerator} and \cref{eq:phi-denominator}, respectively.

    \begin{proof}[Proof (of \cref{lem:szacowanie-zdominowanych})]
        Consider the numerator of \( \phi \) as expanded in \cref{eq:phi-numerator}.
        Statement of the lemma is equivalent to showing that it is bounded by \( (3/2 - 1/n) \) times the value of the denominator which, by \cref{eq:phi-denominator}, is \( D = 2\sum_i |b_i| \).

        We start with the part of the numerator corresponding to the \( RD \) mechanism:
        \begin{equation*}
            \sum_i \frac{4|i|}{2k+1}|b_i| \le
            \frac{4k}{2k+1} \sum_i |b_i| =
            \left( 1 - \frac{1}{n} \right) D,
        \end{equation*}
        where the last equality follows from \( 2k+1 = n \).

        The part of the numerator corresponding to \( PCD \) is a sum of three terms.
        For the first of them, we trivially have \( \sum_j |b_j| = D/2 \).
        Since the above two parts of the numerator already sum up to the expected bound \( (3/2 - 1/n)D \),
        to finish the proof we need to show that the remaining two parts are zeros.

        Consider the first of these remaining terms:
        \begin{equation}\label{val:pcd}
            \sum_{j > 0} |b_j| (2k+1 - 2j)(|b_{j-k-1}| - |b_{j-k}|).
        \end{equation}
        We will show that every term in the above sum is zero.

        Let \( i \) be the first (largest) agent for which \( b_i = 0 \) (such agent exists because \( b_0=0 \)).
        By the assumption that \( b \) is dominated,
        we know that it attains zero for at least \( k+1 \) agents.
        Moreover, it is non-decreasing, so we have \( b_m = 0 \) for every \( i-k \leq m \leq i \).
        The largest possible value of agent \( i \) is \( k \), hence \( i-k \leq 0 \).

        Therefore, the only terms of the sum in \cref{val:pcd} for which the first element \( |b_j| \) of the product is non-zero are those corresponding to \( j > i \).
        But for such \( j \), the last element of the product \( |b_{j-k-1}| - |b_{j-k}| = 0 \),
        because \( i-k \leq j-k-1 < j-k \leq 0 \leq i \), and hence \( |b_{j-k-1}| = 0 \) and \( |b_{j-k}| = 0 \).
        Thus, all terms of the sum in \cref{val:pcd} are zeros.

        A similar argument shows that every term in the sum of the last part of the numerator is also zero.

        This completes the proof of the first part of the lemma.

        Regarding second part of the lemma, the above estimations for the \( PCD \) part of the numerator are tight, independently of the considered function \( b \).
        Therefore, to conclude the proof it is enough to show that the estimation of the \( RD \) part of the numerator is tight for function \( b^@ \).
        For this function we have:
        \begin{equation*}
            \begin{split}
                D=2\sum_i |b_i| = 1
            \end{split}
        \end{equation*}
        and hence:
        \begin{equation*}
            \begin{split}
                \sum_i \frac{4|i|}{2k+1}|b_i| = \frac{4k}{2k+1}0.5 = \left( 1-1/n \right)D.\qedhere
            \end{split}
        \end{equation*}
    \end{proof}

    Before proving \cref{lem:supremum-na-B*}, we show an auxiliary lemma, that will be used in the proof.

    \begin{lemma}\label{lem:monotonicznosc-funkcji-homograficznej}
        Let \( f:\mathbb{R} \to \mathbb{R} \) be a function that can be expressed as \( f(x) = g(x)/h(x) \),
        where \( g,h:\mathbb{R} \to \mathbb{R} \) are linear functions of \( x \).
        Let \( A \subsetneq \mathbb{R} \) be any interval such that \( h(x) \neq 0 \) for every \( x \in A \).
        Then the function \( f \) is monotonic on \( A \).
    \end{lemma}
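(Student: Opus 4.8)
The plan is to compute the derivative of $f$ and show that it has constant sign on $A$, which suffices for monotonicity because $A$ is an interval, hence connected. First I would write the linear functions explicitly as $g(x) = ax + b$ and $h(x) = cx + d$ for real constants $a,b,c,d$, so that
\[
    f(x) = \frac{ax+b}{cx+d}
\]
on the set where $h$ does not vanish. Since $f$ is differentiable wherever $h(x) \neq 0$, and $A$ is contained in such a set by hypothesis, the quotient rule applies at every point of $A$.

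The key computation is the derivative
\[
    f'(x) = \frac{a(cx+d) - c(ax+b)}{(cx+d)^2} = \frac{ad - bc}{(cx+d)^2}.
\]
The crucial observation is that the numerator $ad - bc$ is a constant, independent of $x$, whereas the denominator $(cx+d)^2 = h(x)^2$ is strictly positive for every $x \in A$ (precisely because $h(x) \neq 0$ there). Consequently $f'$ does not change sign on $A$: it is everywhere nonnegative when $ad - bc \geq 0$ and everywhere nonpositive when $ad - bc \leq 0$.

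Since $A$ is an interval, a function whose derivative has constant sign on it is monotonic (nondecreasing if $f' \geq 0$, nonincreasing if $f' \leq 0$), which is the standard consequence of the mean value theorem. This establishes the claim.

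I do not expect any genuine obstacle here; the only points requiring a little care are the degenerate cases, and they are immediate. When $c = 0$ the denominator $h$ is a nonzero constant and $f$ reduces to the affine map $x \mapsto (ax+b)/d$, which is trivially monotonic and is also covered by the derivative formula above. When $ad - bc = 0$ the derivative vanishes identically and $f$ is constant on $A$, which still qualifies as monotonic under the usual non-strict convention. Thus the entire content of the lemma is the identity $f' = (ad - bc)/h^2$ together with the connectedness of $A$.
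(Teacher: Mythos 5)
Your proof is correct and follows essentially the same route as the paper's: write $g(x)=ax+b$, $h(x)=cx+d$, compute $f'(x)=(ad-bc)/(cx+d)^2$, and conclude from the constant-sign derivative that $f$ is monotonic on the interval. Your explicit treatment of the degenerate cases ($c=0$ and $ad-bc=0$) is a small addition the paper leaves implicit, but the argument is the same.
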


    \begin{proof}
        Intuitively, under the assumptions of the lemma, \( f \) is a homographic function.
        From well-known properties, a~homographic function is monotonic on any interval that does not contain a zero of its denominator.

        Formally, since \( g \) and \( h \) are linear in \( x \), there exist constants \( a,b,c,d \in \mathbb{R} \) such that:
        \begin{equation*}
            \begin{split}
                g(x) = ax + b, \quad
                h(x) = cx + d.
            \end{split}
        \end{equation*}
        The function \( f \) is differentiable on \( A \) (as \( h(x) \neq 0 \) on \( A \)), and its derivative is:
        \begin{equation*}
            \begin{split}
                f'(x) = \frac{g'(x)h(x) - g(x)h'(x)}{h(x)^2} =
                \frac{a(cx + d) - c(ax + b)}{(cx + d)^2} =
                \frac{ad - bc}{(cx + d)^2}.
            \end{split}
        \end{equation*}
        The numerator \( ad - bc \) is constant, while the denominator \( (cx + d)^2 \) is the square of a non-zero linear function, and hence always positive.
        Thus, the sign of the derivative is constant, which implies that \( f \) is monotonic on \( A \).
    \end{proof}

    \begin{proof}[Proof (of \cref{lem:supremum-na-B*})]
        If \( b \) is dominated then, by \cref{lem:szacowanie-zdominowanych}, it holds that \( \phi(b) \leq 3/2 - 1/n = \phi(b^@) \)
        (where \( b^@ \) comes from the statement of \cref{lem:szacowanie-zdominowanych}).
        It is trivial to check that \( b^@ \in B^* \) and fulfills the conditions of the lemma.
        This completes the proof for dominated \( b \).

        If \( b \) is not dominated then it attains at least three distinct values,
        since \( 0 \) is attained at most \( k \) times by \( b \).
        Function \( b \) is non-decreasing and \( b_0 = 0 \), hence \( b_{-k} < b_0 < b_k \).
        Moreover, by assumption, \( b \) is non-boundary, so there exists an~element \( v \) in \( \Ima b \) other than \( -0.5, 0, 0.5 \).
        Without loss of generality, assume that \( v \) is positive, i.e., \( v \in \oo{0}{0.5} \).

        Let \( v_{-1} \) denote the largest element in the image of \( b \) less than \( v \)
        (such an element always exists because \( 0 = b_0 < v \)).
        Let \( v_1 \) denote the smallest element in the image of \( b \) greater than \( v \), or \( 1/2 \) if no such element exists.
        Let \( A = b^{-1}(v) \) be the set of all agents for which the value of \( b \) equals \( v \).
        For any \( x \in \mathbb{R} \), let \( b[x] \) denote the function obtained from \( b \)
        by setting the values corresponding to all agents in \( A \) to \( x \), i.e., \( b[x] = b[a \to x : a \in A] \).

        Consider the functions \( b[v_1] \) and \( b[v_{-1}] \).
        By construction, \( \Ima b[v_1] = \Ima b[v_{-1}] = (\Ima b) \setminus \set{v} \), and hence
        \( w(b[v_1]) = w(b[v_{-1}]) < w(b) \), because \( v \), by construction, is non-boundary.

        Assume, for the sake of contradiction, that neither \( b[v_1] \) nor \( b[v_{-1}] \) satisfies the claim of the lemma.
        Since both those functions have fewer non-boundary values, this implies that \( \phi \)
        must be strictly smaller for them than for \( b \).
        We will now show that this is impossible.

        Let \( V = \cc{v_{-1}}{v_1} \).
        Define the function \( f_{b}: V \to \mathbb{R} \) as:
        \[
            f_{b}(x) = \phi(b[x]).
        \]
        We verify that \( f_{b} \) is well-defined, i.e., for any \( x \in V \), function \( b[x] \) belongs to the domain of \( \phi \), i.e., \( B'' \).
        Let \( x \in V \) be arbitrary.
        The transition from \( b \) to \( b[x] \) does not alter the relative order of agent values, maintaining monotonicity of the function.
        Moreover, the value corresponding to agent \( 0 \) in \( b \) remains unchanged (since \( v \neq 0 = b_0 \)).
        Finally, it was previously established that \( b \in B^* \) implies that it takes at least three different values.
        The transformation of replacing values for agents in \( A \), which in \( b \) have the same value \( v \), with \( x \),
        can decrease this value by at most one.
        Therefore, function \( b[x] \) attains at least two different values and is not the constant zero function.

        By the definition of \( f_{b} \), we have:
        \[
            \begin{split}
                f_{b}(v_{-1})  = \phi(b[v_{-1}]), \quad
                f_{b}(v)  = \phi(b), \quad
                f_{b}(v_1)  = \phi(b[v_1]).
            \end{split}
        \]
        Thus, from the inequalities assumed for contradiction, it follows that:
        \[
            f_{b}(v_{-1}) < f_{b}(v), f_{b}(v)>f_{b}(v_1),
        \]
        which implies that \( f_{b} \) is non-monotonic.
        To show a contradiction, we expand the definition of \( f_{b} \).
        By the expansion of \( \phi \), we find that \( f_{b} \) is a fraction.
        Let us denote the numerator of this fraction by \( g \) and the denominator by \( h \).
        By \cref{eq:phi-denominator}, we have:
        \[
            \begin{split}
                h(x) = 2\sum_{i \in N} |b[x]_i| =
                2\sum_{i \in A} x + 2\sum_{i \in N \setminus A} b_i =
                2|A|x + 2\sum_{i \in N \setminus A} b_i.
            \end{split}
        \]
        From the above form, it is evident that \( h(x) \) is a linear function of \( x \).
        Moreover, \( h(x) \) does not vanish for any \( x \in V \) since \( b[x] \) is not the constant zero function
        (as established previously, \( b[x] \) attains at least two distinct values).

        Now consider \( g(x) \), the numerator of the expression \( f_{b} \).
        By \cref{eq:phi-numerator}, we have:
        \[
            \begin{split}
                g(x) = \sum_j \frac{4|j|}{2k+1}|b[x]_j| +
                \sum_j|b[x]_j| +
                \sum_{j > 0} |b[x]_j| (2k+1 - 2j)(|b[x]_{j-k-1}| - |b[x]_{j-k}|) + \\
                \sum_{j < 0} |b[x]_j| (2k+1 + 2j)(|b[x]_{j+k+1}| - |b[x]_{j+k}|).
            \end{split}
        \]

        The term \( b[x]_i \) equals \( x \) for \( i \in A \) and \( b_i \) for \( i \notin A \).
        In the above expression, some terms are of the form \( b[x]_i b[x]_j \) for certain \( i, j \),
        which could potentially be quadratic with respect to \( x \).
        However, in every such term, the difference between \( i \) and \( j \) is at least \( k \).
        Thus, \( i \) and \( j \) have different signs or one of them is \( 0 \).
        Therefore, it cannot be the case that \( i \in A \) and \( j \in A \) simultaneously, as \( 0 \notin A \)
        and \( A \) is a subset of consecutive agents (because \( b \) is monotonic).
        This concludes the proof that all terms in the above expression for \( g(x) \) are linear in \( x \) or constant.
        Consequently, \( g(x) \) is a~linear function with respect to \( x \).

        Finally, we have \( f_{b}(x) = g(x)/h(x) \),
        where \( g(x) \) and \( h(x) \) are linear with respect to \( x \).
        Moreover, \( h(x) \) is non-zero for any \( x \in V \).
        By \cref{lem:monotonicznosc-funkcji-homograficznej}, this implies
        that the function \( f_{b} \) is monotonic over the entire domain \( V \).

        This contradiction completes the proof.
    \end{proof}

    \begin{proof}[Proof (of \cref{lem:apx-na-B*})]
        If function \( b^* \) is dominated, then \( \phi(b^*)\le 3/2-1/n<7/4 \) holds by \cref{lem:szacowanie-zdominowanych}.
        This completes the proof for dominated \( b^* \).

        Let us consider the case of non-dominated \( b^* \).
        To gain better insight into the structure of the function \( b^* \),
        let us represent it as a tuple \( (b^*_{-k},\ldots,b^*_k) \).
        By assumption that \( b^* \) is non-dominated at most \( k \) values in the tuple are equal to \( 0 \).
        By construction of every function in \( B^*\subsetneq B'' \) it also holds that \( b^*_0 = 0 \) and \( b^*_{-k} \le b^*_{-k+1} \le \ldots \le b^*_k \).
        Additionally, \( b^* \) is boundary, so it attains only values \( -0.5, 0, 0.5 \).
        Taking into account the above properties, tuple associated with \( b^* \) must be of the form \( (-0.5,\ldots,-0.5,0,\ldots,0,0.5,\ldots,0.5) \) where values \( -0.5, 0, 0.5 \) are repeated  \( k - m_{-1}, 1 + m_{-1} + m_1, k - m_1 \) times respectively,
        for some \( m_{-1},m_1\in[k] \).
        In other words:
        \begin{equation*}
            b^*(i)=\begin{cases}
                -0.5 & \text{if } i < -m_{-1},           \\
                0    & \text{if } -m_{-1} \le i \le m_1, \\
                0.5  & \text{if } m_1 < i.
            \end{cases}
        \end{equation*}

        Let us express the value of \( \phi(b^*) \) in terms of \( m_{-1},m_1,k \).
        We start by expanding the numerator of \( \phi \) for \( b^* \) as in \cref{eq:phi-numerator}:
        \begin{equation*}
            \begin{split}
                \sum_i \frac{4|i|}{2k+1}|b^*_i| +
                \sum_j|b^*_j| +
                \sum_{j > 0} |b^*_j| (2k+1 - 2j)(|b^*_{j-k-1}| - |b^*_{j-k}|) +
                \sum_{j < 0} |b^*_j| (2k+1 + 2j)(|b^*_{j+k+1}| - |b^*_{j+k}|).
            \end{split}
        \end{equation*}
        Let us consider the first sum in the above expression (the part corresponding to the \( RD \) mechanism).
        Based on the structure of \( b^* \) we have:
        \begin{equation*}
            \begin{split}
                \sum_i \frac{4|i|}{2k+1}|b^*_i| = \frac{2}{2k+1}\left(\sum_{i<-m_{-1}}|i| + \sum_{m_1 < i} |i|\right) =
                \frac{2}{2k+1}\frac{2k^2+2k-m_{-1}^2-m_{-1}-m_1^2-m_1}{2}.
            \end{split}
        \end{equation*}
        Second sum in the expression for the numerator of \( \phi(b^*) \) is equal to:
        \begin{equation*}
            \begin{split}
                \sum_j|b^*_j| = (k-m_{-1})0.5 + (k-m_1)0.5=  k-(m_{-1}+m_1)/2.
            \end{split}
        \end{equation*}
        Regarding the third sum, its terms depend on the difference between the values of \( b^* \) for two consecutive agents, \( j-k-1 \) and \( j-k \), for \( j > 0 \).
        For such range of indices, this difference is non-zero only for \( -m_{-1}=j-k \), i.e., \( j=k-m_{-1} \).
        Therefore, third sum in the expression for numerator of \( \phi(b^*) \) has only one non-zero term and is equal to:
        \begin{equation*}
            \begin{split}
                \sum_{j > 0} |b^*_j| (2k+1 - 2j)(|b^*_{j-k-1}| - |b^*_{j-k}|) =
                0.5(2k+1-2(k-m_{-1}))(|-0.5|-0) =
                (2m_{-1}+1)/4.
            \end{split}
        \end{equation*}
        Similar arguments can be used to show that the last sum in the expression for numerator of \( \phi(b^*) \) is equal to:
        \begin{equation*}
            \begin{split}
                \sum_{j < 0} |b^*_j| (2k+1 + 2j)(|b^*_{j+k+1}| - |b^*_{j+k}|) =
                (2m_1+1)/4.
            \end{split}
        \end{equation*}
        Summing up results above we obtain that the numerator of \( \phi(b^*) \) is equal to:
        \begin{equation*}
            \begin{split}
                \frac{2}{2k+1}\frac{2k^2+2k-m_{-1}^2-m_{-1}-m_1^2-m_1}{2}+
                k-(m_{-1}+m_1)/2 +
                (2m_{-1}+1)/4 +
                (2m_1+1)/4 = \\
                \frac{8k^2 + 8k - 2m_1^2 - 2m_1 - 2m_{-1}^2 - 2m_{-1} + 1}{2(2k + 1)}.
            \end{split}
        \end{equation*}

        Expanding the denominator of \( \phi(b^*) \) as in \cref{eq:phi-denominator} we obtain:
        \begin{equation*}
            \begin{split}
                2\sum_i |b^*_i| = 2\left(\sum_{i<-m_{-1}}|-0.5| + \sum_{m_1 < i} |0.5|\right) =
                2k-m_{-1}-m_1.
            \end{split}
        \end{equation*}

        Combining the results above, we finally obtain that:
        \[
            \begin{split}
                \phi(b^*) =
                \frac{8k^2 + 8k - 2m_1^2 - 2m_1 - 2m_{-1}^2 - 2m_{-1} + 1}{2(2k + 1)(2k - m_1 - m_{-1})} =\\
                \frac{8k^2 + 8k - 2(m_1 + m_{-1})^2 + 2m_1m_{-1} - 2(m_1 + m_{-1}) + 1}{2(2k + 1)(2k - (m_1 + m_{-1}))}
            \end{split}
        \]

        Now, let us apply the substitution \( m_1 := s - q, m_{-1} := s + q \) to the above expression.
        We obtain:
        \[
            \begin{split}
                \phi(b^*) =
                \frac{8k^2 + 8k - 8s^2 + 4s^2 - 4q^2 - 4s + 1}{2(2k + 1)(2k - 2s)}
                \leq \frac{8k^2 + 8k - 4s^2 - 4s + 1}{2(2k + 1)(2k - 2s)}=:R_{k,s}.
            \end{split}
        \]

        The last inequality holds because \( -4q^2 \) is non-positive. Thus, the numerator of the last fraction is at least as large as the numerator of the previous fraction.
        Regarding the denominator in the considered case, we assume that function \( b^* \) is not dominated. Hence, value \( 0 \) is attained no more than \( k \) times.
        That gives us the inequality \( m_{-1} + m_1 + 1 \leq k \).
        In terms of the new variables, \( s \) and \( q \), this is equivalent to \( s \leq (k - 1)/2 \).
        This implies that the denominator of the above expression is always positive, because it is a product of positive terms.

        The derivative of the \( R_{k,s} \) with respect to \( s \) is:
        \[
            \frac{\partial}{\partial s} R_{k,s} =
            \frac{8k^2 - 8ks + 4k + 4s^2 + 1}{4(k - s)^2(2k + 1)}.
        \]

        By previously showed inequality \( s \leq (k - 1)/2 \), the denominator of the above derivative is always positive, because it is a product of positive terms.
        Similarly, the numerator is positive because:
        \[
            8k^2 - 8ks = 8k(k - s) \geq 0,
        \]
        and the remaining terms \( 4s^2, 4k, 1 \) are clearly non-negative, with \( 1 \) being positive.
        Therefore, increasing \( s \) increases the value of the \( R_{k,s} \).

        From this, the following inequality holds (substituting the maximum allowable value for \( s \) i.e., \( (k - 1)/2 \)):
        \[
            \frac{8k^2 + 8k - 4s^2 - 4s + 1}{2(2k + 1)(2k - 2s)} \leq
            \frac{7k^2 + 8k + 2}{4k^2 + 6k + 2}.
        \]

        The right-hand side is a weighted average of the numbers \( 7/4, 8/6, 2/2 \) with non-negative weights \( 4k^2, 6k, 2 \).
        Thus, the value of this expression is no greater than:
        \[ \max(7/4, 8/6, 2/2) = 7/4. \]
        The chain of inequalities above completes the proof.
    \end{proof}

\fi

\end{document}